\documentclass[10pt,twocolumn,letterpaper]{article}
\usepackage[T1]{fontenc}
\usepackage{amssymb}
\usepackage{amsmath}
\usepackage{amsthm}
\usepackage{booktabs}
\usepackage[font={small}]{caption}
\usepackage{cite}
\usepackage{color}
\usepackage{colortbl}
\usepackage{enumitem}
\usepackage{framed}
\usepackage{flushend}
\usepackage[hidelinks]{hyperref}
\usepackage{hhline}
\PassOptionsToPackage{hyphens}{url}
\usepackage{nicefrac}
\usepackage[letterpaper,margin=1in]{geometry}
\usepackage{letltxmacro}
\usepackage{mathtools}
\usepackage{microtype}
\usepackage{multirow}
\usepackage{pbox}
\usepackage{scalefnt}
\usepackage{subfig}
\usepackage{tikz}
\usepackage{txfonts}
\usepackage{url}
\usepackage{usenix}
\usepackage{xcolor}
\usepackage{xspace}

\urlstyle{rm}

\DeclareMathAlphabet\mathcal{OMS}{cmsy}{m}{n}
\SetMathAlphabet\mathcal{bold}{OMS}{cmsy}{b}{n}

\DeclareSymbolFont{AMSb}{U}{msb}{m}{n}
\DeclareSymbolFontAlphabet{\mathbb}{AMSb}

\definecolor{LightCyan}{rgb}{0.88,1,1}

\setitemize{noitemsep,topsep=2pt,parsep=2pt,partopsep=2pt}

\def\C{\mathcal{C}}

\def\RR{\mathbb{R}}

\newcommand\D{\mathcal{D}}
\newcommand\K{\mathcal{K}}
\newcommand\hatf{\hat{f}}
\newcommand\hatg{\hat{g}}
\newcommand\hath{\hat{h}}

\newcommand\A{\mathcal{A}}
\def\F{\mathbb{F}}

\newcommand{\state}{\mathsf{state}}

\newcommand{\itpara}[1]{\medskip\noindent\textit{#1}}
\newcommand{\sitpara}[1]{\smallskip\noindent\textit{#1}}
\renewcommand{\paragraph}[1]{\medskip\noindent\textbf{#1}}

\newcommand{\afe}{AFE\xspace}
\newcommand{\afes}{AFEs\xspace}

\newcommand{\Enc}{\textsf{Encode}}
\newcommand{\Dec}{\textsf{Decode}}
\newcommand{\Ver}{\textsf{Valid}}

\newcounter{thmc}
\newtheorem{thm}[thmc]{Theorem}

\newtheorem{claim}[thmc]{Claim}

\theoremstyle{definition}
\newtheorem{defn}[thmc]{Definition}

\newcommand{\name}{Prio\xspace}
\newcommand{\Name}{Prio\xspace}

\newcommand{\rulesep}{\unskip\ \textcolor{gray}{\vrule\vrule}\ }
\newcommand{\rulesepThick}{\unskip\ \textcolor{gray}{\vrule\vrule\vrule\vrule}\ }

\makeatletter

\let\c@table\c@figure
\makeatother 

\newcommand{\proj}[1]{\textsf{Trunc}_{#1}}
\newcommand{\Hsh}[1]{[#1]_\textbf{h}}
\newcommand{\Ash}[1]{[#1]_\textbf{a}}

\setlist[description]{leftmargin=\parindent,topsep=0ex,itemsep=0ex,partopsep=1ex,parsep=1ex}

\LetLtxMacro{\oldtextsc}{\textsc}
\renewcommand{\textsc}[1]{\oldtextsc{\scalefont{1.2}#1}}

\pagestyle{empty}
\thispagestyle{empty}

 \date{\today}
\title{\vspace{-12em}
Prio: Private, Robust, and Scalable Computation of Aggregate Statistics}

\author{Henry Corrigan-Gibbs and Dan Boneh\thanks{\hspace{-2em}
This is the extended version of a paper by the
same title that appeared at NSDI 2017.
This version includes additional security analysis,
corrects typos, and fixes an error in the proceedings version.}\\
Stanford University}

\renewcommand\footnotemark{}

\makeatletter\begin{document}
\maketitle

\paragraph{Abstract.}
This paper presents \name, a privacy-preserving system for the
collection of aggregate statistics. Each \name client holds a private data value
(e.g., its current location), and a small set of servers compute statistical
functions over the values of all clients (e.g., the most popular location). As
long as at least one server is honest, the \name servers learn nearly nothing about the
clients' private data, except what they can infer from the aggregate statistics
that the system computes. 
To protect functionality in the face of faulty or malicious clients, \name uses 
{\em secret-shared non-interactive proofs} (SNIPs), a new cryptographic technique 
that yields a hundred-fold performance improvement over conventional
zero-knowledge approaches. 
\Name extends classic private aggregation techniques
to enable the collection of a large class of useful statistics. For example,
\name can perform a least-squares regression on high-dimensional client-provided
data without ever seeing the data in the clear.

 \section{Introduction}
Our smartphones, cars, and wearable electronics
are constantly sending telemetry data and other sensor readings
back to cloud services. 
With these data in hand, a cloud service can compute useful {\em aggregate
statistics} over the entire population of devices. 
For example, navigation app providers collect real-time location data from their
users to identify areas of traffic congestion in a city and
route drivers along the least-crowded roads~\cite{jeske2013floating}.
Fitness tracking services collect information on their users' physical activity
so that each user can see how her fitness regimen compares to the average~\cite{fitness}.
Web browser vendors collect lists of unusually popular homepages 
to detect homepage-hijacking adware~\cite{erlingsson2014rappor}.

Even when a cloud service is only interested in 
learning aggregate statistics about its user population as a whole,
such services often end up collecting private data from each client
and storing it for aggregation later on.
These centralized caches of private user data pose severe security and
privacy risks: 
motivated attackers may steal and disclose clients'
sensitive information~\cite{wang2016defending,nyt-hack}, cloud services
may misuse the clients' information for profit~\cite{uber},
and intelligence agencies may appropriate the data for targeting or
mass surveillance purposes~\cite{leaky-apps}. 

To ameliorate these threats, major technology companies, including
Apple~\cite{apple-dp} and Google~\cite{erlingsson2014rappor,fanti2016building},
have deployed privacy-preserving systems for the collection of user data.
These systems use a ``randomized response'' mechanism to achieve
differential privacy~\cite{warner1965randomized,DP}.
For example, a mobile phone vendor may want to learn how many of its phones have
a particular uncommon but sensitive app installed (e.g., the AIDSinfo app~\cite{aids-info}).
In the simplest variant of this approach, 
each phone sends the vendor a bit indicating whether it has the app installed,
except that the phone flips its bit with a fixed probability $p < 0.5$.
By summing a large number of these noisy bits, the vendor can get a good
estimate of the true number of phones that are running the sensitive app.

This technique scales very well and is robust even if some of the phones
are malicious---each phone can influence the final sum by $\pm 1$ at most.
However, randomized-response-based systems provide relatively {\em weak privacy} guarantees:
every bit that each phone transmits leaks some private user information 
to the vendor.
In particular, when $p=0.1$ the vendor has a good chance of seeing the
correct (unflipped) user response. 
Increasing the noise level~$p$ decreases this leakage, but adding more noise 
also decreases the accuracy of the vendor's final estimate.
As an example, assume that the vendor collects randomized responses from one
million phones using $p=0.49$, and that
$1$\% of phones have the sensitive app installed.
Even with such a large number of responses, the vendor will incorrectly
conclude that {\em no phones} have the app installed 
roughly one third of the time.

An alternative approach to the data-collection problem is to have
the phones send {\em encryptions} of their bits to a set of servers.
The servers can sum up the encrypted bits and decrypt 
only the final sum~\cite{melis2016,elahi2014privex,popa2009vpriv,popa2011privacy,joye2013scalable,danezis2013smart}.
As long as all servers do not collude, these encryption-based
systems provide much {\em stronger privacy}
guarantees: the system leaks nothing about a user's private bit to 
the vendor, except what the vendor can infer from the final sum.
By carefully adding structured noise to the final sum, 
these systems can provide differential privacy as
well~\cite{shi2011privacy,melis2016,elahi2014privex}.

However, in gaining this type of privacy, 
many secret-sharing-based systems 
sacrifice {\em robustness}: a malicious client can send the servers an
encryption of a large integer value $v$ instead of a zero/one bit.
Since the client's value $v$ is encrypted, the servers cannot tell from 
inspecting the ciphertext that $v > 1$.
Using this approach, a single malicious client can 
increase the final sum by $v$, instead of by~$1$.
Clients often have an incentive to cheat in this way:
an app developer could use this attack to boost the perceived
popularity of her app, with the goal of getting it
to appear on the app store's home page.
It is possible to protect against these attacks using zero-knowledge proofs~\cite{shi2011privacy},
but these protections destroy {\em scalability}: 
checking the proofs requires heavy public-key cryptographic operations at
the servers and can increase the servers' workload by orders of magnitude. 

In this paper, we introduce \name, a system for private aggregation that
resolves the tension between privacy, robustness, and scalability.
\Name uses a small number of servers;
as long as one of the \name servers
is honest, the system leaks nearly nothing about clients' private data
(in a sense we precisely define), except
what the aggregate statistic itself reveals.
In this sense, \name provides a strong form of cryptographic {\em privacy}.
This property holds even against an adversary who can observe the entire
network, control all but one of the servers, and control a large number 
of clients.

\Name also maintains {\em robustness} in the 
presence of an unbounded number of malicious clients,
since the \name servers can detect and reject syntactically incorrect client
submissions in a privacy-preserving way.
For instance, a car cannot report a speed of 100,000~km/h if the system
parameters only allow speeds between 0~and 200~km/h.
Of course, \name cannot prevent a malicious client from submitting 
an untruthful data value:
for example, a faulty car can always misreport its actual speed.

To provide robustness, \name uses a new technique that we call 
{\em secret-shared non-interactive proofs} (SNIPs).
When a client sends an encoding of its
private data to the \name servers, the client also sends to each
server a ``share'' of a proof of correctness.
Even if the client is malicious and the proof shares are malformed, the
servers can use these shares to collaboratively check---without ever seeing
the client's private data in the clear---that
the client's encoded submission is syntactically valid.  
These proofs rely only upon fast, information-theoretic cryptography, and
require the servers to exchange only a few hundred bytes of information 
to check each client's submission.

\Name provides privacy and robustness without sacrificing {\em scalability}.
When deployed on a collection of five servers spread around the world
and 
configured to compute private sums over vectors of private client data, 
\name imposes a 5.7$\times$ slowdown over a na\"ive data-collection
system that provides no privacy guarantees whatsoever.
In contrast, a state-of-the-art comparison system that 
uses client-generated non-interactive zero-knowledge proofs of correctness
(NIZKs)~\cite{blum1988non,schnorr1991efficient} imposes a 267$\times$
slowdown at the servers.
\Name improves client performance as well: it is 50-100$\times$
faster than NIZKs and we estimate that it is three orders of magnitude faster
than methods based on succinct non-interactive arguments of knowledge
(SNARKs)~\cite{gennaro2013quadratic,ben2013snarks,parno2013pinocchio}.
The system is fast in absolute terms as well:
when configured up to privately collect the distribution of
responses to a survey with 434 true/false questions, the client 
performs only 26~ms of computation, and our distributed cluster of \name
servers can process each client submission in under 2~ms on average.

\paragraph{Contributions.}
In this paper, we:
\begin{itemize}[leftmargin=1\parindent]
  \item introduce {\em secret-shared non-interactive proofs} (SNIPs), 
      a new type of information-theoretic
      zero-knowledge proof, optimized for the client/server setting,
\item present {\em affine-aggregatable encodings}, a framework that unifies
      many data-encoding techniques used in prior work on private aggregation, and
\item demonstrate how to combine these encodings with SNIPs to provide
      robustness and privacy in a large-scale data-collection system.
\end{itemize}

With \name, we demonstrate that data-collection systems can simultaneously
achieve strong privacy, robustness to faulty clients, and performance at scale.

 \section{System goals}
\label{sec:goals}

A \name deployment consists of a small number of infrastructure servers and a
very large number of clients.
In each time epoch, every client $i$ in the system holds a private value $x_i$.
The goal of the system is to allow the servers to compute $f(x_1, \dots, x_n)$,
for some aggregation function $f$, in a way that leaks as little as possible
about each client's private $x_i$ values to the servers.

\paragraph{Threat model.}
The parties to a \name deployment must establish pairwise authenticated
and encrypted channels. 
Towards this end, we assume the existence of 
a public-key infrastructure and the basic
cryptographic primitives (CCA-secure public-key encryption~\cite{cramer1998practical,shoup2001oaep,shoup2001proposal}, 
digital signatures~\cite{goldwasser1988digital}, etc.)
that make secure channels possible.
We make no synchrony assumptions about the network: the adversary may 
drop or reorder packets on the network at will, and
the adversary may monitor all links in the network.
Low-latency anonymity systems, such as Tor~\cite{dingledine2004tor}, provide no
anonymity in this setting,
and \name does not rely on such systems to protect client privacy.

\paragraph{Security properties.}
\Name protects client {\em privacy} as long
as at least one server is honest.
\Name provides {\em robustness} (correctness) only if all servers are honest.
We summarize our security definitions here, but please refer 
to Appendix~\ref{app:secdefs} for details.

\itpara{Anonymity.}
A data-collection scheme maintains client anonymity if the
adversary cannot tell which honest client submitted which data value through
the system, even if the adversary chooses the honest clients' data values,
controls all other clients, and controls all but one server.
\Name always protects client anonymity.

\itpara{Privacy.} 
\Name provides $f$-privacy, for an aggregation function $f$, if
an adversary, who controls any number of clients and all but one server,
learns nothing about the honest clients' values $x_i$, except 
what she can learn from the value $f(x_1, \dots, x_n)$ itself.
More precisely, given $f(x_1, \dots, x_n)$,
every adversary controlling a proper subset of the servers, along with any number of
clients, can simulate its view of the protocol run.

For many of the aggregation functions $f$ that \name implements, \name provides
strict $f$-privacy.
For some aggregation functions, which we highlight in Section~\ref{sec:struct},
\name provides $\hat{f}$-privacy, where $\hat{f}$ is a function that
outputs slightly more information than $f$.  
More precisely, 
$\hat{f}(x_1,\ldots,x_n) = \big\langle f(x_1,\ldots,x_n),\ L(x_1,\ldots,x_n)\big\rangle$
for some modest leakage function $L$.

\Name does not natively provide differential privacy~\cite{DP}, since
the system adds no noise to the aggregate statistics it computes.
In Section~\ref{sec:disc:attacks}, we discuss when differential privacy 
may be useful and how we can extend \name to provide it.

\itpara{Robustness.}
A private aggregation system is robust if
a coalition of malicious clients can affect the output of the system only
by misreporting their private data values;
a coalition of malicious clients cannot otherwise corrupt the system's output.
For example, if the function $f(x_1, \dots, x_n)$ counts the number of times a
certain string appears in the set $\{x_1, \dots, x_n\}$, then a single
malicious client should be able to affect the count by at most one.

\Name is robust only against adversarial clients---{\em not} against adversarial servers.
Although providing robustness against malicious servers seems desirable at
first glance, doing so would come at privacy and performance costs, which we
discuss in Appendix~\ref{app:robfault}.
Since there could be millions of clients in a \name deployment,
and only a handful of servers (in fixed locations with known administrators), 
it may also be possible to eject faulty servers using out-of-band means.

 \section{A simple scheme}
\label{sec:simple}

\newcommand{\imgwidth}{0.23\textwidth}
\begin{figure*}
\centering
\subfloat[The client sends a share of its encoded submission and SNIP proof to each server.]{\includegraphics[width=\imgwidth]{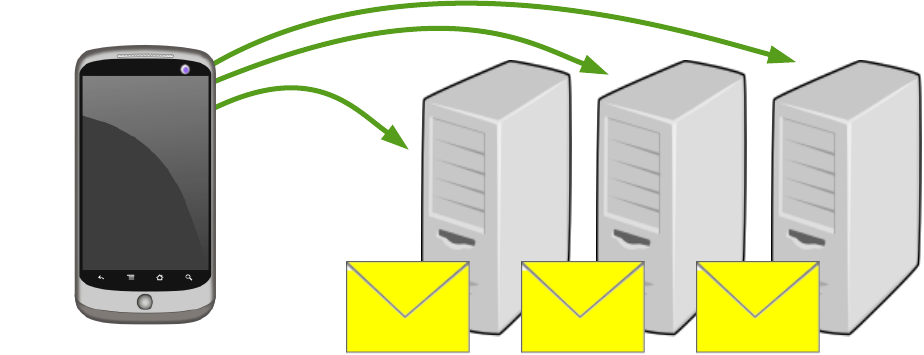}}~~\rulesep{}~\subfloat[The servers validate the client's SNIP proof to ensure that the submission is valid.]{\includegraphics[width=\imgwidth]{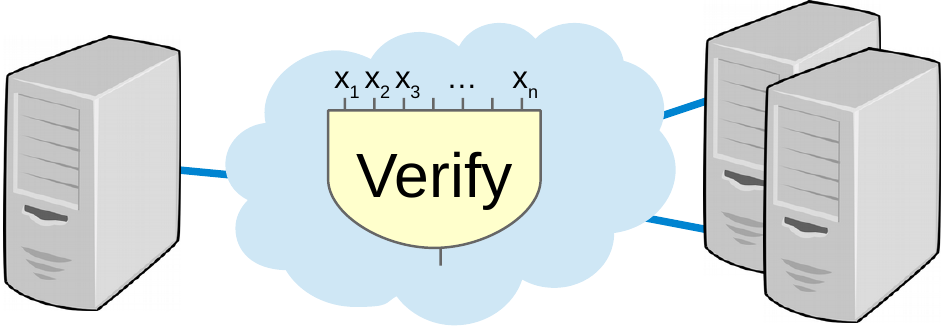}}~~\rulesep{}~\subfloat[If the checks pass, the servers update their local accumulators with the client-provided data.]{\includegraphics[width=\imgwidth]{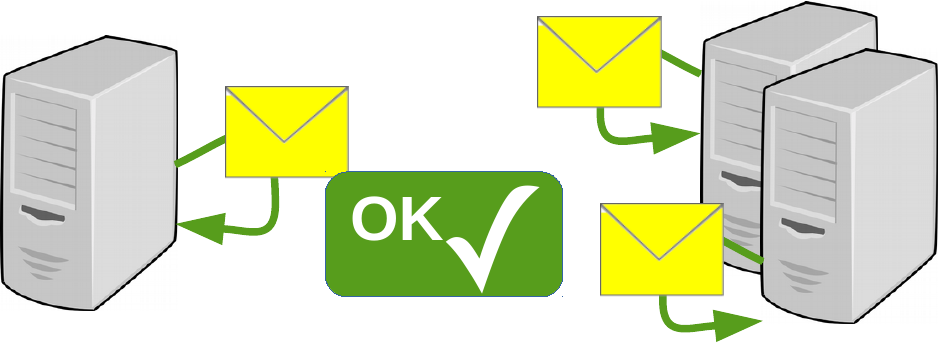}}~~\rulesepThick{}~\subfloat[After accumulating many packets, the servers publish their accumulators to reveal the aggregate.]{\includegraphics[width=\imgwidth]{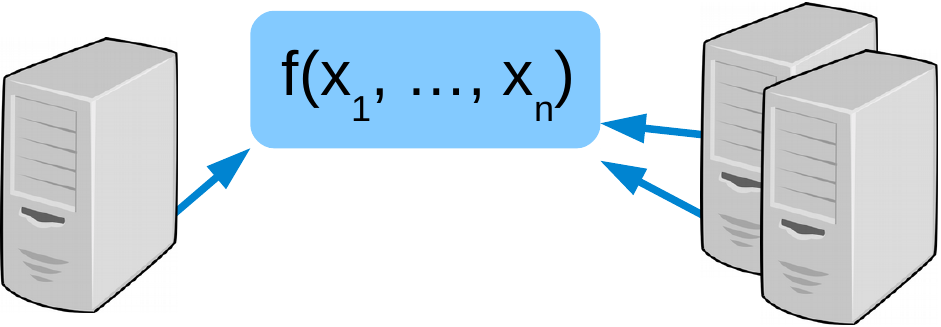}}
\caption{An overview of the \name pipeline for processing client submissions.} 
\label{fig:overview}
\end{figure*}

Let us introduce \name by first presenting a simplified version of it.
In this simple version, 
each client holds a one-bit integer $x_i$ and the servers want to compute
the sum of the clients' private values~$\sum_i x_i$.
Even this very basic functionality has many real-world applications. 
For example, the developer of a health data mobile app
could use this scheme to collect the number of 
app users who have a certain medical condition.
In this application, the bit $x_i$ would indicate 
whether the user has the condition, and the sum over
the $x_i$s gives the count of affected users.

The public parameters for the \name deployment include a prime $p$.
Throughout this paper, when we write 
``$c = a + b \in \F_p$,'' we mean ``$c = a + b \pmod p$.''
The simplified \name scheme for computing sums proceeds in three steps:
\begin{enumerate}

\item \textbf{Upload.}
      Each client $i$ splits its
      private value $x_i$ into $s$ shares, one per server, using a secret-sharing scheme.
      In particular, the client picks random integers $[x_i]_1, \dots, [x_i]_s \in \F_p$,
      subject to the constraint:
      $x_i = [x_i]_1 + \cdots + [x_i]_s \in \F_p$.
      The client then sends, over an encrypted and authenticated channel, one share 
      of its submission to each server.

\item \textbf{Aggregate.}
      Each server $j$ holds an accumulator value $A_j \in \F_p$, initialized to zero.
      Upon receiving a share from the $i$th client, the server adds the uploaded
      share into its accumulator: $A_j \gets A_j + [x_i]_j \in \F_p$.

\item \textbf{Publish.}
      Once the servers have received a share from each client, they publish
      their accumulator values. Computing the sum of the accumulator
      values $\sum_j A_j \in \F_p$ yields the desired sum
      $\sum_i x_i$ of the clients' private values, 
      as long as the modulus $p$ is larger than the number
      of clients 
      (i.e., the sum $\sum_i x_i$ does not ``overflow'' the modulus).
\end{enumerate}

There are two observations we can make about this scheme.
First, even this simple scheme provides privacy: the servers learn 
the sum $\sum_i x_i$ but they learn nothing else about the clients' 
private inputs. 
Second, the scheme does {\em not} provide robustness. 
A single malicious client can completely corrupt the protocol output
by submitting (for example), a random integer $r \in \F_p$ to each server.

The core contributions of \name are to improve this basic scheme in
terms of security and functionality.
In terms of security, \name extends the simple scheme to provide robustness
in the face of malicious clients.
In terms of functionality, \name extends the simple scheme to allow privacy-preserving 
computation of a wide array of aggregation functions (not just sum).

 \section{Protecting correctness with SNIPs}
\label{sec:disrupt}

Upon receiving shares of a client's data value, the \name servers need a way
to check if the client-submitted value is well formed.
For example, in the simplified protocol of Section~\ref{sec:simple},  
every client is supposed to send the servers the share of a value $x$ such
that $0 \leq x \leq 1$.
However, since the client sends only a {\em single share} of its value $x$ to
each server---to preserve privacy---each server essentially receives an encrypted
version of $x$ and cannot unilaterally determine if $x$ is well formed. 
In the more general setting, each \name client submits to each server a share
$[x]_i$ of a vector $x \in \F^L$, for some finite field~$\F$.
The servers hold a validation predicate $\Ver(\cdot)$, and should
only accept the client's data submission if
$\Ver(x) = 1$ (Figure~\ref{fig:overview}).

To execute this check in \name, we introduce a new cryptographic tool
called {\em secret-shared non-interactive proofs} (``SNIPs'').
With these proofs, the client can quickly prove to the servers that
$\Ver(x) = 1$, for an arbitrary function $\Ver$, without leaking anything else
about~$x$ to the servers.

\paragraph{Building blocks.}
All arithmetic in this section takes place in a finite field $\F$,
or modulo a prime $p$, if you prefer.
We use a simple additive secret-sharing scheme over $\F$:
to split a value $x \in \F$ into $s$ shares, choose
random values $([x]_1, \dots, [x]_s) \in \F^s$
subject to the constraint that $x = \sum_i[x]_i \in \F$.
In our notation, $[x]_i$ denotes the $i$th share of $x$.
An adversary who gets hold of any subset of up to $s-1$
shares of $x$ learns nothing, in an information-theoretic sense,
about $x$ from the shares.

This secret-sharing scheme is linear, which means that the servers can
perform affine operations on shares without communicating.
That is, by adding shares $[x]_i$ and $[y]_i$, 
the servers can locally construct shares $[x+y]_i$. 
Given a share $[x]_i$, the servers can also construct 
shares $[\alpha x + \beta]_i$, for any constants $\alpha,\beta \in \F$.
(This is a classic observation from the multi-party computation 
literature~\cite{ben1988completeness}.)

Our construction uses {\em arithmetic circuits}.
An arithmetic circuit is like a boolean circuit except that it uses
finite-field multiplication, addition, and multiplication-by-constant gates,
instead of boolean \textsc{and}, \textsc{or}, and \textsc{not} gates.
See Appendix~\ref{app:beaver:circuit} for a formal definition.

\subsection{Overview} 
\label{sec:disrupt:overview}

A secret-shared non-interactive proof (SNIP) protocol consists
of an interaction between a client (the prover)
and multiple servers (the verifiers).
At the start of the protocol:
\begin{itemize}
\item[--]
  each server $i$ holds a vector $[x]_i \in \F^L$, 
\item[--]
  the client holds the vector $x = \sum_i [x]_i \in \F^L$, and
\item[--]
all parties hold an arithmetic circuit representing a predicate
$\Ver : \F^L \to \F$.
\end{itemize}
The client's goal is to convince the servers that $\Ver(x)=1$,
without leaking anything else about $x$ to the servers.
To do so, the client sends a proof string to each server.
After receiving these proof strings, the servers 
gossip amongst themselves
and then conclude either that $\Ver(x)=1$ 
(the servers ``accept $x$'') or not
(the servers ``reject $x$'').

For a SNIP to be useful in \name, 
it must satisfy the following properties:

\begin{description}
\item[Correctness.]
    If all parties are honest, the servers will accept $x$.

\item[Soundness.]
    If all servers are honest, and if $\Ver(x) \neq 1$, 
    then for all malicious clients, even ones running in super-polynomial
    time, 
    the servers will reject $x$ with overwhelming probability.
    In other words, no matter how the client cheats, the servers
    will almost always reject $x$. 

\item[Zero knowledge.]
    If the client and at least one server are honest, 
    then the servers learn nothing about $x$, 
    except that $\Ver(x) = 1$.
    More precisely, there exists a simulator (that does not take $x$
    as input) that accurately reproduces the view of any 
    proper subset of malicious servers executing the SNIP protocol.
\end{description}
See Appendix~\ref{app:zkproof} for formal definitions.

\noindent
These three security properties are nearly identical 
to the properties required of a zero-knowledge interactive 
proof system~\cite{goldwasser1989knowledge}.
However, in the conventional zero-knowledge setting, 
there is a single prover and single verifier,
whereas here we have a single prover (the client) and 
{\em many} verifiers (the servers).

\itpara{Our contribution.}
We devise a SNIP that requires minimal server-to-server
communication, is compatible with any public $\Ver$ circuit, and relies solely on
fast, information-theoretic primitives.
(We discuss how to hide the $\Ver$ circuit from the client in
Section~\ref{sec:disrupt:serv}.)

To build the SNIP, we first generalize a ``batch verification'' technique of 
Ben-Sasson et al.~\cite{ben2012near} and then show how a set of servers
can use it to verify an entire circuit computation by exchanging
a only few field elements.
We implement this last step with a new adaptation of Beaver's multi-party
computation (MPC) protocol to the client/server setting~\cite{beaver1991efficient}.

\itpara{Related techniques.}
Prior work has studied interactive proofs in both 
the many-prover~\cite{ben1988multi,fortnow1994power}
and many-verifier settings~\cite{beaver1991secure,abe2002non}.
Prior many-verifier protocols require relatively expensive
public-key primitives~\cite{abe2002non} or require an amount of
server-to-server traffic that grows linearly in the size of the 
circuit for the $\Ver$ function~\cite{beaver1991secure}.
In concurrent independent work, Boyle et al.~\cite{boyle2016function}
construct what we can view as a very efficient SNIP for a {\em specific} 
$\Ver$ function, in which the servers are semi-honest 
(``honest but curious'')~\cite{boyle2016function}.
They also use a Beaver-style MPC multiplication;
their techniques otherwise differ from ours.

\subsection{Constructing SNIPs}
\label{sec:disrupt:snip}

To run the SNIP protocol,
the client and servers execute the following steps:

\itpara{Set-up.}
Let $M$ be the number of multiplication gates in 
the arithmetic circuit for $\Ver$.
We work over a field $\F$ that is large enough to
ensure that $2M+2 \ll |\F|$.

\newcounter{SnipCounter}

\newcommand{\stepCount}[1]{\refstepcounter{SnipCounter}\label{#1}\arabic{SnipCounter}}
\paragraph{Step~\stepCount{proto:eval}: Client evaluation.}
The client evaluates the $\Ver$ circuit on its input $x$.
The client thus knows the value that every wire in the 
circuit takes on during the computation of $\Ver(x)$.
The client uses these wire values to construct
three randomized polynomials $f$, $g$, and $h$, which
encode the values on the input and output wires of each of
the $M$ multiplication gates in the $\Ver(x)$ computation. 

Label the $M$ multiplication gates in the $\Ver(x)$ circuit,
in topological order from inputs to outputs,
with the numbers $\{1, \dots, M\}$.
For $t \in \{1, \dots, M\}$, let us define $u_t$ and $v_t$ to be the
values on the left and right input wires of the $t$-th multiplication gate.
The client chooses $u_0$ and $v_0$ to be values
sampled independently and uniformly at random from $\F$.
Then, define $f$ and $g$ to be the lowest-degree
polynomials such that $f(t) = u_t$ and $g(t) = v_t$,
for all $t \in \{0,\dots,M\}$.
Finally, define the polynomial $h$ as~${h = f \cdot g}$.

The polynomials $f$ and $g$ will have degree at most $M$, and the
polynomial $h$ will have degree at most $2M$.
Since $h(t) = f(t)\cdot g(t) = u_t \cdot v_t$ for all $t \in \{1, \dots, M\}$, 
$h(t)$ is equal to the value of the output wire ($u_t \cdot v_t$) 
of the $t$-th multiplication gate in the $\Ver(x)$ circuit,
for $1 \leq t \leq M$.

In Step~\ref{proto:eval} of the checking protocol, 
the client executes the computation of $\Ver(x)$, 
uses polynomial interpolation to construct the
polynomials $f$ and $g$, and multiplies these polynomials
to produce $h = f \cdot g$.
The client then splits the random values $f(0) = u_0$ and $g(0) = v_0$,
using additive secret sharing, 
and send shares $[f(0)]_i$ and $[g(0)]_i$ to server $i$.
The client also splits the coefficients of $h$ 
and sends the $i$th share of the coefficients $[h]_i$ to server $i$.

\paragraph{Step~\stepCount{proto:poly}: Consistency checking at the servers.} 
Each server $i$ holds a share $[x]_i$ of the client's private value~$x$.
Each server also holds shares $[f(0)]_i$, $[g(0)]_i$, and $[h]_i$.
Using these values, each server can---without communicating 
with the other servers---produce shares $[f]_i$ and $[g]_i$ 
of the polynomials $f$ and $g$.

To see how, first observe that if a server has a share of every wire
value in the circuit, along with shares of $f(0)$ and $g(0)$, 
it can construct $[f]_i$ and $[g]_i$ using polynomial interpolation.
Next, realize that each server can reconstruct 
a share of every wire value in the circuit 
since each server:
\begin{itemize}
  \item has a share of each of the input wire values~($[x]_i$), 
  \item has a share of each wire value coming out of a multiplication gate
        (for $t \in \{1, \dots, M\}$, the value $[h]_i(t)$ is a share of the $t$-th such wire), and
  \item can derive all other wire value shares via affine operations
        on the wire value shares it already has.
\end{itemize}
Using these wire value shares, the servers use polynomial interpolation
to construct $[f]_i$ and $[g]_i$.

If the client and servers have acted honestly up to this point, 
then the servers will now hold shares of polynomials 
$f$, $g$, and $h$ such that $f \cdot g = h$.

In contrast, a malicious client could have sent the servers 
shares of a polynomial $\hat{h}$ such that, for some $t \in \{1, \dots, M\}$,
$\hat{h}(t)$ is {\em not} the value on the output wire in the $t$-th
multiplication gate of the $\Ver(x)$ computation.
In this case, the servers will reconstruct shares of polynomials 
$\hat{f}$ and $\hat{g}$ that might not be equal to $f$ and $g$.
We will then have with certainty 
that $\hat{h} \neq \hat{f} \cdot \hat{g}$.
To see why, consider the least $t_0$ for which $\hat{h}(t_0) \neq h(t_0)$.
For all $t \leq t_0$, $\hat{f}(t) = f(t)$ and $\hat{g}(t) = g(t)$, by
construction.
Since 
\[ \hat{h}(t_0) \neq h(t_0) = f(t_0)\cdot g(t_0) = \hat{f}(t_0) \cdot \hat{g}(t_0),\]
it must be that $\hat{h}(t_0) \neq  \hat{f}(t_0) \cdot \hat{g}(t_0)$, so $\hat{h} \neq \hat{f} \cdot \hat{g}$.
(Ben-Sasson et al.~\cite{ben2012near} use polynomial identities
to check the consistency of secret-shared values in 
a very different MPC protocol.
Their construction inspired our approach.)

\paragraph{Step~\stepCount{proto:test}a: Polynomial identity test.}
At the start of this step, each server $i$ holds shares 
$[\hatf]_i$, $[\hatg]_i$, and $[\hath]_i$ of polynomials 
$\hatf$, $\hatg$, and $\hath$.
Furthermore, it holds that $\hatf \cdot \hatg = \hath$
if and only if the servers collectively hold a set of wire value shares that,
when summed up, equal the internal wire values of the $\Ver(x)$ circuit computation.

The servers now execute a variant of the Schwartz-Zippel randomized polynomial
identity test~\cite{schwartz1980fast,zippel1979probabilistic} to check whether
this relation holds.
The principle of the test is that if $\hatf(t) \cdot \hatg(t) \neq \hath(t)$,
then the polynomial $t \cdot (\hatf(t) \cdot \hatg(t) - \hath(t))$ 
is a non-zero polynomial of degree at most $2M + 1$.
(Multiplying the polynomial $\hatf \cdot g - \hath$ by $t$ 
is useful for the next step.)
Such a polynomial can have at most $2M+1$ zeros in $\F$, so if we choose
a random $r \in \F$ and evaluate 
$r \cdot (\hatf(r) \cdot \hatg(r) - \hath(r))$, the servers will detect that
$\hatf \cdot \hatg \neq \hath$ with probability at least $1 - \frac{2M+1}{|\F|}$.

To execute the polynomial identity test, one of the servers samples a random value $r \in \F$.
Each server $i$ then evaluates her share of each of the three polynomials
on the point $r$ to get $[\hatf(r)]_i$, $[\hatg(r)]_i$, and $[\hath(r)]_i$.
The servers can perform this step locally, since polynomial evaluation
requires only affine operations on shares.
Each server then applies a local linear operation to these last
two shares to produce shares of $[r \cdot \hatg(r)]_i$, and ${[r \cdot \hath(r)]_i}$.

Assume for a moment that each server $i$ can multiply her shares
$[\hatf(r)]_i$ and $[r \cdot \hatg(r)]_i$ to produce a 
share ${[r \cdot \hatf(r) \cdot \hatg(r)]_i}$.
In this case, the servers can use a linear operation to get
shares $\sigma_i = [r\cdot(\hatf(r) \cdot \hatg(r) - \hath(r))]_i$. 
The servers then publish these $\sigma_i$s and 
ensure that ${\sum_i \sigma_i = 0 \in \F}$.
The servers reject the client's submission if $\sum_i \sigma_i \neq 0$.

\paragraph{Step~\ref{proto:test}b: Multiplication of shares.}
Finally, the servers must somehow multiply their shares $[\hatf(r)]_i$
and $[r\cdot \hatg(r)]_i$ to get a share $[r \cdot \hatf(r) \cdot \hatg(r)]_i$ without
leaking anything to each other about the values $\hatf(r)$ and $\hatg(r)$.
To do so, we adapt a multi-party computation (MPC) technique of Beaver~\cite{beaver1991efficient}.
The details of Beaver's MPC protocol are not critical here, but
we include them for reference in Appendix~\ref{app:beaver:proto}.

Beaver's result implies that if servers receive, from a trusted
dealer, one-time-use shares $([a]_i, [b]_i, [c]_i) \in \F^3$ of random values 
such that $a \cdot b = c \in \F$
(``multiplication triples''), then the servers can very efficiently
execute a multi-party multiplication of a pair secret-shared values. 
Furthermore, the multiplication protocol is fast: 
it requires each server to broadcast a single message.

In the traditional MPC setting, the parties to the computation have to run an expensive
cryptographic protocol to generate the multiplication triples themselves~\cite{SPDZ}.
In our setting however, the client generates the multiplication triple
on behalf of the servers:
the client chooses $(a, b, c) \in \F^3$ such that $a\cdot b = c \in \F$,
and sends shares of these values to each server.
If the client produces shares of these values correctly, then the servers
can perform a multi-party multiplication of shares to complete 
the correctness check of the prior section.

Crucially, we can ensure that {\em even if the client sends shares of an invalid
multiplication triple to the servers}, 
the servers will still catch the cheating client with high probability.
First, say that a cheating client sends the servers shares 
$([a]_i, [b]_i, [c]_i) \in \F^3$ such that $a \cdot b \neq c \in \F$.
Then we can write $a \cdot b = (c + \alpha) \in \F$, 
for some constant $\alpha > 0$.

In this case, when the servers run Beaver's MPC multiplication protocol to
execute the polynomial identity test, the result of the test will be shifted
by $\alpha$.
(To confirm this, consult our summary of Beaver's protocol in
Appendix~\ref{app:beaver:proto}.)
So the servers will effectively testing whether the polynomial
\[\hat{P}(t) = t\cdot(\hatf(t) \cdot \hatg(t) - \hath(t)) + \alpha\]
is identically zero.
Whenever $\hatf \cdot \hatg \neq \hath$, 
it holds that
$t \cdot (\hatf(t) \cdot \hatg(t) - \hath(t))$ is a non-zero polynomial.
So, if ${\hatf \cdot \hatg \neq \hath}$, 
then $\hat{P}(t)$ must also be a non-zero polynomial.
(In constructing $\hat{P}$, We multiply the term ``${\hatf \cdot \hatg - \hath}$'' by $t$, to 
ensure that whenever this expression is non-zero, the resulting polynomial $\hat{P}$ is 
non-zero, even if $\hatf$, $\hatg$, and $\hath$ are degree-zero polynomials, and the
client chooses $\alpha$ adversarially.)

Since we only require soundness to hold if all servers are honest,
we may assume that
the client did not know the servers' random value $r$ when 
the client generated its multiplication triple.
This implies that $r$ is distributed independently of $\alpha$, 
and since we only require soundness to hold if the servers are honest,
we may assume that $r$ is sampled uniformly from $\F$ as well.

So, even if the client cheats, the servers will still be executing the polynomial
identity test on a non-zero polynomial of degree at most $2M+1$. 
The servers will thus catch a cheating client with probability at least $1 -
\frac{2M+1}{|\F|}$.
In Appendix~\ref{app:zkproof:sound}, we present a formal definition of the soundness
property and we prove that it holds.

\paragraph{Step~\stepCount{proto:output}: Output verification.} 
If all servers are honest, at the start of 
the final step of the protocol, each server $i$ will
hold a set of shares of the values that the $\Ver$
circuit takes on during computation of $\Ver(x)$:
$([w_1]_i, [w_2]_i, \dots)$.
The servers already hold shares of the input wires of this
circuit ($[x]_i$), so to confirm that $\Ver(x)=1$, the
servers need only publish their shares of the output wire.
When they do, the servers can sum up these shares to confirm
that the value on the output wire is equal to one, in which
case it must be that $\Ver(x)=1$, except with 
some small failure probability due to the polynomial identity test.

\medskip

Combining all of the pieces above: 
each share of a SNIP proof consists of
a share of the client-produced tuple ${\pi = (f(0), g(0), h, a, b, c)}$.

\subsection{Security and efficiency}
The correctness of the scheme follows by construction.
To trick the servers into accepting a malformed submission, 
a cheating client must subvert the polynomial identity test.
This bad event has probability at most
$(2M+1)/|\F|$, where $M$ is the number of multiplication gates
in $\Ver(\cdot)$.
By taking $|\F| \approx 2^{128}$, or repeating Step~\ref{proto:test} a few
times, we can make this failure probability extremely small.

We require neither completeness nor soundness to hold 
in the presence of malicious {\em servers},
though we do require soundness against malicious clients.
A malicious server can thus trick the honest servers into rejecting a
well-formed client submission that they should have accepted.
This is tantamount to the malicious server mounting a selective denial-of-service
attack against the honest client.
We discuss this attack in Section~\ref{sec:disc:attacks}.

We prove in Appendix~\ref{app:zkproof:zero} that, as long as there is at least
one honest server, the dishonest servers gain no information---in an unconditional, information-theoretic sense---about
the client's data values nor about the values on the internal wires in the
$\Ver(x)$ circuit.

\newcommand*\rotbox{\rotatebox{90}}
\definecolor{lightyellow}{HTML}{FFFF99}
\newcommand*\ycell{\cellcolor{lightyellow}}
\begin{table}
\centering
  {\footnotesize
  \begin{tabular}{l l | c c | c}
  &\multicolumn{1}{c}{}   & NIZK & \multicolumn{1}{c}{SNARK} & {\bf \Name} (SNIP) \\ \hline
    \multirow{3}{*}{\rotbox{\bf Client}}& \ycell Exps.& \ycell $M$ & \ycell $M$ & \ycell $0$\\
                                        &Muls.       & $0$ & $M\log M$& $M \log M$\\
                                        &Proof len. & $M$ & $1$ & $M$ \\ \hline
    \multirow{3}{*}{\rotbox{\bf Servers}}& \ycell Exps./Pairs. & \ycell $M$ & \ycell $1$ & \ycell $0$ \\
                                        &Muls.       & $0$ & $M$ & $M \log M$\\
                                        &Data transfer& $M$ & $1$ & $1$ \\ \hline
  \end{tabular}
  }
  \caption{An asymptotic comparison of \name with standard zero-knowledge techniques showing
    that \name reduces the computational burden for clients and servers.
    The client holds a vector $x \in \F^M$, each server $i$ holds a share $[x]_i$, and
  the client convinces the servers that each component of $x$ is a $0/1$ value in $\F$.
  We suppress the $\Theta(\cdot)$ notation for readability.}
  \label{tab:efficiency}
\end{table}

\paragraph{Efficiency.}
The remarkable property of this SNIP construction is that the server-to-server
communication cost grows neither with the complexity of the verification circuit nor
with the size of the value $x$ (Table~\ref{tab:efficiency}).
The computation cost at the servers is essentially the same as the cost for
each server to evaluate the $\Ver$ circuit locally.
That said, the client-to-server communication cost does grow linearly with 
the size of the $\Ver$ circuit.
An interesting challenge would be to try to reduce the client's bandwidth usage without
resorting to relatively expensive public-key cryptographic techniques~\cite{parno2013pinocchio,ben2014succinct,ben2014scalable,costello2015geppetto,braun2013verifying}.

\subsection{Computation at the servers}
\label{sec:disrupt:serv}

Constructing the SNIP proof requires the client to compute $\Ver(x)$ on its own. 
If the verification circuit takes secret server-provided values as input, or
is itself a secret belonging to the servers,
then the client does not have enough information to compute $\Ver(x)$. 
For example, the servers could run a proprietary verification algorithm to detect
spammy client submissions---the servers would want to run this algorithm without
revealing it to the (possibly spam-producing) clients.
To handle this use case, 
the servers can execute the verification check themselves
at a slightly higher cost.  

This variant maintains privacy only against ``honest but curious''
servers---in contrast, the SNIP-based variant maintains privacy
against actively malicious servers.
See Appendix~\ref{app:serverside} for details.

 \section{Gathering complex statistics}
\label{sec:struct}

So far, we have developed the means to compute private sums
over client-provided data
(Section~\ref{sec:simple}) and to check
an arbitrary validation predicate against secret-shared data 
(Section~\ref{sec:disrupt}).
Combining these two ideas with careful data encodings, which
we introduce now, allows \name to compute more 
sophisticated statistics over private client data.
At a high level, each client first encodes its private data value
in a prescribed way, and the servers then privately compute the sum of the encodings.
Finally, the servers can decode the summed encodings to recover the statistic of interest.
The participants perform this encoding and decoding via a mechanism 
we call affine-aggregatable encodings (``AFEs'').

\subsection{Affine-aggregatable encodings (AFEs)}
In our setting, each client $i$ holds a value $x_i\in \D$,
where $\D$ is some set of data values.
The servers hold an aggregation function $f: \D^n \to \A$, whose range is a
set of aggregates $\A$.
For example, the function $f$ might compute the standard deviation of 
its $n$ inputs.
The servers'  goal is to evaluate $f(x_1,\ldots,x_n)$ without
learning the $x_i$s.

An \afe gives an efficient way to encode the
data values $x_i$ such that it is possible
to compute the value $f(x_1, \dots, x_n)$ 
given only the {\em sum of the encodings} of $x_1,\ldots,x_n$. 
An \afe consists of three efficient algorithms $(\Enc, \Ver, \Dec)$, defined
with respect to a field~$\F$ and two integers $k$ and $k'$,
where $k' \leq k$:
\begin{itemize}[leftmargin=\parindent]
\item $\Enc(x)$: maps an input $x \in \D$
      to its encoding in $\F^k$,
\item $\Ver(y)$: returns true if and only if $y \in \F^k$ is a valid
      encoding of some data item in $\D$,
\item $\Dec(\sigma)$: 
      takes $\sigma = \sum_{i=1}^n \proj{k'}\big(\Enc(x_i)\big) \in \F^{k'}$ 
      as input,
      and outputs $f(x_1, \dots, x_n)$. 
      The $\proj{k'}(\cdot)$ function
      outputs the first $k' \leq k$ components of its input.
\end{itemize}
The \afe uses all $k$ components of the encoding in validation,
but only uses $k'$ components to decode~$\sigma$.
In many of our applications we have $k' = k$.  

An \afe is {\em private with respect to a function $\hat{f}$},
or simply $\hat{f}$-private, if 
$\sigma$ reveals nothing about $x_1,\ldots,x_n$ beyond
what $\hat{f}(x_1, \dots, x_n)$ itself reveals. 
More precisely, it is possible to efficiently simulate $\sigma$ 
given only $\hat{f}(x_1, \dots, x_n)$.  
Usually $\hat{f}$ reveals nothing more than the aggregation function $f$ 
(i.e., the minimum leakage possible), but in some
cases $\hat{f}$ reveals a little more than~$f$. 

For some functions $f$ we can build more efficient $f$-private \afes
by allowing the encoding algorithm to be randomized.  In these cases,
we allow the decoding algorithm to return an answer that is only an
approximation of $f$, and we also allow it to fail with some
small probability.  

Prior systems have made use of specific \afes for 
sums~\cite{elahi2014privex,kursawe2011},
standard deviations~\cite{popa2011privacy}, 
counts~\cite{melis2016,broadbent2007information}, and
least-squares regression~\cite{karr2005secure}.
Our contribution is to unify these notions
and to adopt existing \afes to enable better
composition with \name's SNIPs.
In particular, by using more complex encodings, we can 
reduce the size of the circuit for $\Ver$, which results in
shorter SNIP proofs.

\paragraph{\afes in \name: Putting it all together.}
The full \name system
computes $f(x_1,\ldots,x_n)$ privately as follows (see Figure~\ref{fig:overview}):
Each client encodes its data value $x$ using the \afe $\Enc$ routine
for the aggregation function $f$.
Then, as in the simple scheme of Section~\ref{sec:simple},
every client splits its encoding into $s$ shares and sends one share
to each of the $s$ servers.
The client uses a SNIP proof (Section~\ref{sec:disrupt}) 
to convince the servers that its encoding satisfies the \afe $\Ver$ predicate.

Upon receiving a client's submission, the servers verify the 
SNIP to ensure that the encoding is well-formed.  
If the servers conclude that the encoding is valid, every server adds the
first $k'$ components of the encoding share to its local running accumulator.
(Recall that $k'$ is a parameter of the \afe scheme.)
Finally, after collecting valid submissions from many clients,
every server publishes its local accumulator, enabling
anyone to run the \afe $\Dec$ routine to 
compute the final statistic in the clear.
The formal description of the system is presented in 
Appendix~\ref{app:security}, where we also analyze its security. 

\itpara{Limitations.}
There exist aggregation functions for which 
all \afe constructions must have large encodings.
For instance, say that each of $n$ clients holds an integer
$x_i$, where $1 \leq x_i \leq n$. 
We might like an \afe that computes the median of these integers
$\{x_1, \dots, x_n\}$, working over a field $\F$ with $|\F| \approx n^d$,
for some constant $d \geq 1$.

We show that there is no such AFE whose encodings consist of 
$k' \in o(n/\log n)$ field elements.
Suppose, towards a contradiction, that such an AFE did exist.
Then we could describe any sum of encodings using
at most $O(k' \log |\F|) = o(n)$ bits of information.
From this \afe, we could build a single-pass, space-$o(n)$
streaming algorithm for computing the exact median of an $n$-item stream. 
But every single-pass streaming algorithm for
computing the exact median over an $n$-item stream requires $\Omega(n)$ 
bits of space~\cite{guha2009stream}, which is a contradiction.
Similar arguments may rule out space-efficient \afe constructions 
for other natural functions.

\subsection{Aggregating basic data types}
\label{sec:struct:basic}

This section presents the basic affine-aggregatable encoding schemes 
that serve as building blocks for the more sophisticated schemes. 
In the following constructions, the clients hold
data values $x_1, \dots, x_n \in \D$, and our goal
is to compute an aggregate $f(x_1, \dots, x_n)$.

In constructing these encodings, we have two goals.
The first is to ensure that the \afe leaks as little as possible
about the $x_i$s, apart from the value $f(x_1, \dots, x_n)$ itself.
The second is to minimize the number of multiplication gates in the
arithmetic circuit for $\Ver$, since the cost of the SNIPs
grows with this quantity.

In what follows, we let $\lambda$ be a security
parameter, such as $\lambda = 80$ or $\lambda = 128$.

\paragraph{Integer sum and mean.}
We first construct an \afe{} for computing the sum of $b$-bit integers. 
Let $\F$ be a finite field of size at least $n 2^b$. 
On input $0 \leq x \leq 2^b - 1$, 
the $\Enc(x)$ algorithm first computes the bit representation of $x$,
denoted $(\beta_0, \beta_1, \dots, \beta_{b-1}) \in \{0,1\}^b$.
It then treats the binary digits as elements of $\F$, and outputs
$$\Enc(x) = (x, \beta_0, \ldots, \beta_{b-1}) \in \F^{b+1}.$$

To check that $x$ represents a $b$-bit integer, the $\Ver$
algorithm ensures that
each $\beta_i$ is a bit, and that the bits 
represent $x$.
Specifically, the algorithm checks that the following equalities hold over $\F$:
\[  \Ver(\Enc(x)) = \bigg(x = \sum_{i=0}^{b-1} 2^i \beta_i\bigg)
     \land \bigwedge_{i=1}^n \bigg[( \beta_i - 1) \cdot \beta_i = 0 \bigg]. \]

The $\Dec$ algorithm takes the sum of encodings $\sigma$ as input, 
truncated to only the first coordinate.
That is, 
$\sigma = \sum_{i=1}^n \proj{1}\big(\Enc(x_1)\big) = x_1 + \cdots +x_n$.
This $\sigma$ is the required aggregate output. 
Moreover, this \afe is clearly sum-private.

To compute the arithmetic mean, we 
divide the sum of integers by $n$ over the rationals. 
Computing the product and geometric mean works in exactly 
the same matter, except that we encode $x$ using $b$-bit logarithms.

\paragraph{Variance and \textsc{stddev}.}
Using known techniques~\cite{castelluccia2005efficient,popa2011privacy},
the summation AFE above lets us compute the variance of a set of $b$-bit 
integers using the identity:
$\text{Var}(X) = \text{E}[X^2] - (\text{E}[X])^2$.
Each client encodes its integer $x$ as $(x, x^2)$ and then applies
the summation AFE to each of the two components.
(The $\Ver$ algorithm also ensures that second
integer is the square of the first.)
The resulting two values let us compute the variance.

This \afe also reveals the expectation $\text{E}[X]$.
It is private with respect to the function $\hat{f}$ that outputs
both the expectation and variance of the given set of integers. 

\paragraph{Boolean \textsc{or} and \textsc{and}.}
When $\D = \{0,1\}$ and $f(x_1,\ldots,x_n) = \textsc{or}(x_1,\ldots,x_n)$
the encoding operation outputs an element of $\F_2^\lambda$ 
(i.e., a $\lambda$-bit bitstring) as:
\[ \Enc(x) = \begin{cases}
   \text{$\lambda$ zeros}               & \text{if $x = 0$} \\
   \text{a random element in $\F_2^\lambda$} & \text{if $x = 1$.} 
\end{cases}  \]
The $\Ver$ algorithm outputs ``1'' always, since all $\lambda$-bit encodings
are valid.  
The sum of encodings is simply the \textsc{xor}
of the $n$ $\lambda$-bit encodings. 
The $\Dec$ algorithm takes as input a $\lambda$-bit string and outputs
``0'' if and only if its input is a $\lambda$-bit string of zeros.
With probability $1 - 2^{-\lambda}$, 
over the randomness of the encoding algorithm,
the decoding operation returns the boolean \textsc{or} of the encoded values.
This \afe is \textsc{or}-private. 
A similar construction yields an \afe for boolean \textsc{and}.

\paragraph{\textsc{min} and \textsc{max}.}
To compute the minimum and maximum of integers over a range
$\{0, \dots, B-1\}$, where $B$ is small
(e.g., car speeds in the range $0$--$250$ km/h), the
$\Enc$ algorithm can represent each integer in unary as a length-$B$ vector
of bits $(\beta_0, \dots, \beta_{B-1})$, where $\beta_i = 1$ if
and only if the client's value $x \leq i$.
We can use the bitwise-\textsc{or} construction above to 
take the \textsc{or} of the client-provided vectors---the largest
value containing a ``1'' is the maximum.
To compute the minimum instead, replace \textsc{or} with \textsc{and}.
This is \textsc{min}-private, as in the \textsc{or} protocol above. 

When the domain is large (e.g., we want the \textsc{max} 
of 64-bit packet counters, in a networking application), we can 
get a $c$-approximation of the \textsc{min} and \textsc{max} using 
a similar idea: 
divide the range $\{0, \dots, B-1\}$ into $b = \log_c B$ ``bins''
$[0, c), [c, c^2), \dots, [c^{b-1}, B)$.
Then, use the small-range \textsc{min}/\textsc{max} construction,
over the $b$ bins, 
to compute the approximate statistic.
The output will be within a multiplicative factor of $c$ of
the true value.  
This construction is private with respect to the approximate
\textsc{min}/\textsc{max} function. 

\paragraph{Frequency count.}
Here, every client has a value $x$ in a small set of data values $\D = \{0, \dots, B-1\}$.
The goal is to output a $B$-element vector $v$, where $v[i]$ is the number
of clients that hold the value $i$, for every $0 \leq i < B$.  

Let $\F$ be a field of size at least $n$.
The $\Enc$ algorithm encodes a value $x \in \D$ as a length-$B$ 
vector $(\beta_0, \dots, \beta_{B-1}) \in \F^B$
where $\beta_i = 1$ if $x = i$ and $\beta_i = 0$ otherwise.
The $\Ver$ algorithm checks that each $\beta$ value is in the set $\{0,1\}$
and that the sum of the $\beta$s is exactly one. 
The $\Dec$ algorithm does nothing: the final output
is a length-$B$ vector, whose $i$th component gives the number of clients
who took on value $i$.  Again, this \afe is private with respect to the
function being computed. 

The output of this \afe yields enough information to compute
other useful functions (e.g., quantiles) of the
distribution of the clients' $x$ values.
When the domain $\D$ is large, this \afe is very inefficient.
In Appendix~\ref{app:afe-extra}, we give \afes for 
approximate counts over large domains.

\paragraph{Sets.}
We can compute the intersection or union of sets over a small universe
of elements using the boolean \afe operations: represent
a set of $B$ items as its characteristic vector of booleans, 
and compute an \textsc{and}
for intersection and 
an \textsc{or} for union. 
When the universe is large, the approximate \afes
of Appendix~\ref{app:afe-extra} are more efficient.

\subsection{Machine learning}
\label{sec:struct:ml}
We can use \name for training machine learning models on private client data.
To do so, we exploit the observation of Karr et al.~\cite{karr2005secure} that a
system for computing private sums can also privately train linear models.
(In Appendix~\ref{app:afe-extra}, we also show how to use \name to privately 
evaluate the $R^2$-coefficient of an existing model.)
In \name, we extend their work by showing how to perform these
tasks while maintaining robustness against malicious clients.

Suppose that every client holds a data point $(x,y)$ where 
$x$ and $y$ are $b$-bit integers. 
We would like to train a model that takes $x$ as input and outputs
a real-valued prediction $\hat{y}_i = M(x) \in \RR$ of $y$.
We might predict a person's blood pressure ($y$)
from the number of steps they walk daily ($x$).

\begin{table}
  \centering
  {\small
  \begin{tabular}{r r r r r r r}\\
    && \multicolumn{2}{c}{\textbf{Workstation}} & \multicolumn{2}{c}{\textbf{Phone}}\\
    \multicolumn{2}{r}{\em Field size:}& $87$-bit & $265$-bit & $87$-bit & $265$-bit \\ \hline
    \multicolumn{2}{r}{Mul.~in field ($\mu{}s$)}  & $1.013$ & $1.485$ & $11.218$ & $14.930$\\ \hline
        \multirow{3}{0.5cm}{\fontsize{9pt}{9pt}\selectfont{\Name client time}}&$L=10^1$& $0.003$ & $0.004$ & $0.017$ & $0.024$\\
                                      &$L=10^2$& $0.024$ & $0.036$ & $0.112$ & $0.170$\\
                                      &$L=10^3$& $0.221$ & $0.344$ & $1.059$ & $2.165$\\ \hline
                     \end{tabular}
  }
  \caption{Time in seconds for a client to 
    generate a \name submission
    of $L$ four-bit integers to be summed at the servers.
    Averaged over eight runs.}
        \label{tab:bench}
\end{table}

We wish to compute the least-squares linear fit 
$h(x) = c_0 + c_1 x$ over all of the client points.  
With $n$ clients, the model coefficients $c_0$ and $c_1$ satisfy
the linear relation:
\begin{equation} \label{eq:minsq}
  \begin{pmatrix}  
         n                & \sum_{i=1}^n x_i \\[2mm]
         \sum_{i=1}^n x_i  & \sum_{i=1}^n x_i^2
    \end{pmatrix} 
  \cdot
    \begin{pmatrix}  c_0 \\[2mm] c_1  \end{pmatrix}  
  =
    \begin{pmatrix}  \sum_{i=1}^n y_i \\[2mm] \sum_{i=1}^n x_i y_i \end{pmatrix}
\end{equation}

To compute this linear system in an \afe, every client encodes her
private point $(x,y)$ as a vector
\[ (x, x^2, y, x y,\quad  
    \beta_0,\ldots,\beta_{b-1},\quad
    \gamma_0,\ldots,\gamma_{b-1}) \in \F^{2b+4},
\]
where $(\beta_0,\ldots,\beta_{b-1})$ is the binary representation of
$x$ and $(\gamma_0,\ldots,\gamma_{b-1})$ is the binary representation of
$y$.  The validation algorithm checks that all the $\beta$ and $\gamma$
are in $\{0,1\}$, and that all the arithmetic relations hold, analogously
to the validation check for the integer summation \afe.
Finally, the decoding algorithm takes as input
the sum of the encoded vectors truncated to the first four components:
\[  \sigma = \textstyle{\big(\ \sum_{i=1}^n x,\quad  \sum_{i=1}^n x^2,\quad
                    \sum_{i=1}^n y,\quad  \sum_{i=1}^n x y \ \big)  },
\]
from which the decoding algorithm computes the required
real regression coefficients $c_0$ and $c_1$ using~{(\ref{eq:minsq})}.
This \afe is private with respect to the function that outputs the 
least-squares fit $h(x) = c_0 + c_1 x$, along 
with the mean and variance of the set $\{x_1,\ldots,x_n\}$.

When $x$ and $y$ are real numbers, we can embed the reals into a finite
field $\F$ using a fixed-point representation, as long as we size the field large
enough to avoid overflow.

The two-dimensional approach above generalizes directly to 
perform linear regression on $d$-dimensional feature vectors
$\bar{x} = (x^{(1)}, \ldots, x^{(d)})$.
The \afe yields a least-squares approximation of the form
$h(\bar{x}) = c_0 + c_1 x^{(1)} + \cdots + c_d x^{(d)}$.
The resulting \afe is private with respect
to a function that reveals the least-square coefficients $(c_0, \ldots,c_d)$,
along with the $d \times d$ covariance matrix 
$\sum_i { \bar{x}_{i} \cdot (\bar{x}_{i})^T }$.

\section{Evaluation}
\label{sec:eval}

In this section, we demonstrate that
\name's theoretical contributions translate into practical 
performance gains.
We have implemented a \name prototype in 5,700 lines of Go
and 620 lines of C 
(for FFT-based polynomial operations, built on the FLINT library~\cite{flint}).
Unless noted otherwise, our evaluations use an FFT-friendly 87-bit field.
Our servers communicate 
with each other using Go's TLS implementation.
Clients encrypt and sign their messages to servers using NaCl's
``box'' primitive, which obviates the need for client-to-server TLS connections.
Our code is available online at \url{https://crypto.stanford.edu/prio/}.

We evaluate the SNIP-based variant of \name (Section~\ref{sec:disrupt:overview}) 
and also the variant in which the servers keep the $\Ver$ predicate private
(``\Name-MPC,'' Section~\ref{sec:disrupt:serv}).
Our implementation includes three optimizations described in Appendix~\ref{app:opts}.
The first uses a pseudo-random generator (e.g., AES in counter mode)
to reduce the client-to-server data transfer by a factor of roughly $s$
in an $s$-server deployment.
The second optimization allows the servers to verify SNIPs
without needing to perform expensive polynomial interpolations. 
The third optimization gives an efficient way for the servers to compute
the logical-\textsc{and} of multiple arithmetic circuits to check that
multiple $\Ver$ predicates hold simultaneously.

\begin{figure*}
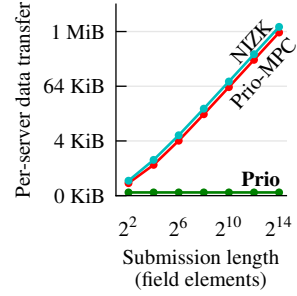

\begin{minipage}[b]{0.4\textwidth}
\centering
\begingroup\makeatletter
\makeatother\endgroup \caption{\Name's use of SNIPs (\S\ref{sec:disrupt}) reduces bandwidth consumption.}
\label{fig:server-bw}
\end{minipage}
\end{figure*}

We compare \name against a private aggregation scheme that uses
non-interactive zero-knowledge proofs (NIZKs) to provide
robustness.
This protocol is similar to the ``cryptographically verifiable''
interactive protocol of Kursawe et al.~\cite{kursawe2011} and
has roughly the same cost, in terms of exponentiations per client request, as the 
``distributed decryption'' variant of PrivEx~\cite{elahi2014privex}.
We implement the NIZK scheme using a Go wrapper
of OpenSSL's NIST P256 code~\cite{dediscrypto}.
We do not compare \name against systems, such as
ANONIZE~\cite{hohenberger2014anonize} and PrivStats~\cite{popa2011privacy},
that rely on an external anonymizing proxy to protect against a network
adversary.
(We discuss this related work in Section~\ref{sec:rel}.)

\subsection{Microbenchmarks}
\label{sec:eval:micro}
Table~\ref{tab:bench} presents the time required for a \name client
to encode a data submission on a workstation
(2.4 GHz Intel Xeon E5620) and mobile
phone (Samsung Galaxy SIII, 1.4 GHz Cortex A9).
For a submission of 100 integers, the client time is
roughly 0.03 seconds on a workstation, and just over
0.1 seconds on a mobile phone.

To investigate the load that \name places on the servers,
we configured five Amazon EC2 servers 
(eight-core c3.2xlarge machines, Intel Xeon E5-2680 CPUs) 
in five Amazon data centers
(N.~Va., N.~Ca., Oregon, Ireland, and Frankfurt) and had them 
run the \name protocols.
An additional three c3.2xlarge machines in 
the N.\ Va.\ data center simulated a large number of \name clients.
To maximize the load on the servers, we had each client 
send a stream of pre-generated \name data packets to the servers
over a single TCP connection.
There is no need to use TLS on the client-to-server \name connection because 
\name packets are encrypted and authenticated at the application layer
and can be replay-protected at the servers.

Figure~\ref{fig:micro} gives the throughput
of this cluster in which each client submits a vector
of zero/one integers and the servers sum these vectors. 
The ``No privacy'' line on the chart gives the throughput for a dummy 
scheme in which a single server accepts encrypted client data submissions
directly from the clients with no privacy protection whatsoever.
The ``No robustness'' line on the chart gives the throughput for a 
cluster of five servers that use a secret-sharing-based private
aggregation scheme ({\em \`a la} Section~\ref{sec:simple}) with no robustness protection.
The five-server ``No robustness'' scheme is slower than the
single-server ``No privacy'' scheme because of the cost of
coordinating the processing of submissions amongst the five servers.
The throughput of \name is within a factor of 5$\times$ of the no-privacy
scheme for many submission sizes, and \name outperforms the NIZK-based
scheme by more than an order of magnitude. 

Figure~\ref{fig:servers} shows how the throughput of a \name
cluster changes as the number of servers increases, when the system
is collecting the sum of 1,024 one-bit client-submitted integers,
as in an anonymous survey application.
For this experiment, we locate all of the servers in the same data center,
so that the latency and bandwidth between each pair of servers is roughly constant.
With more servers, an adversary has to compromise a
larger number of machines to violate \name's privacy guarantees.
\begin{figure*}
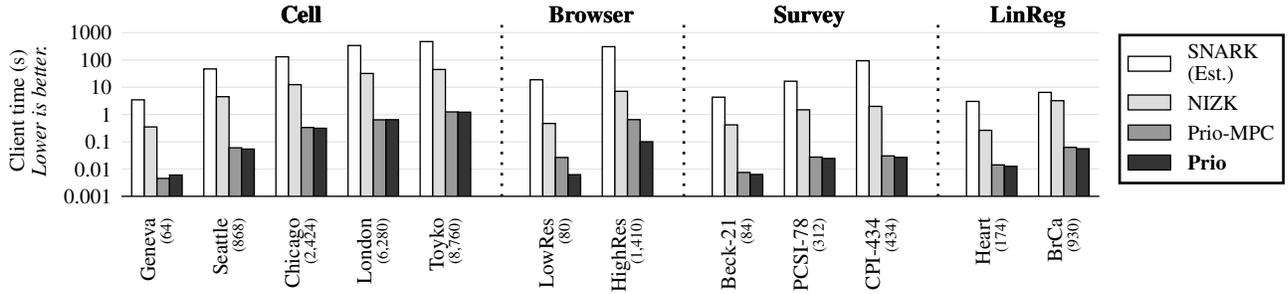

\centering
\begingroup\makeatletter
\makeatother\endgroup \caption{Client encoding time for
different application domains when using \name, a non-interactive
zero-knowledge system (NIZK), or a SNARK-like system (estimated).
Averaged over eight runs.
The number of $\times$ gates in the $\Ver$ circuit is listed
in parentheses.}
\label{fig:geo-time}
\end{figure*}

Adding more servers barely affects the system's throughput.
The reason is that we are able to load-balance the bulk of the work of checking
client submissions across all of the servers. (This optimization is 
only possible because we require robustness to hold only if all servers are honest.)
We assign a single \name server to be the ``leader'' that coordinates
the checking of each client data submission.
In processing a single submission in an $s$-server cluster, the leader 
transmits $s$ times more bits than a non-leader, but as the number of
servers increases, each server is a leader for a smaller share of incoming submissions.
The NIZK-based scheme also scales well: as the number of servers increases, the
heavy computational load of checking the NIZKs is distributed over more machines.

Figure~\ref{fig:server-bw} shows the number of bytes
each non-leader \name server needs to transmit to check the validity of a single
client submission for the two \name variants, and for the NIZK scheme.
The benefit of \name is evident: the \name servers transmit
a constant number of bits per submission---{\em independent} of the size of
the submission or complexity of the $\Ver$ routine.
As the submitted vectors grow,
\name yields a 4,000-fold bandwidth saving over NIZKs,
in terms of server data transfer.

\subsection{Application scenarios}

To demonstrate that Prio's data types are expressive enough to collect
real-world aggregates, we have configured \name 
for a few potential application domains.
\sitpara{Cell signal strength.}
A collection of \name servers can collect the average mobile signal strength in
each grid cell in a city without leaking the user's location history to the aggregator.
We divide the geographic area into a km$^2$ grid---the number of grid cells
depends on the city's size---and we encode the signal strength
at the user's present location as a four-bit integer.
(If each client only submits signal-strength data for a few
grid cells in each protocol run, extra optimizations can 
reduce the client-to-server data transfer.
See ``Share compression'' in Appendix~\ref{app:afe-extra}.)

\sitpara{Browser statistics.}
The Chromium browser uses the RAPPOR system to gather private information about
its users~\cite{erlingsson2014rappor,chromium2016}.
We implement a \name instance for gathering a subset of these statistics:
average CPU and memory usage, along with the frequency counts of 
16 URL roots.
structure, described in Appendix~\ref{app:afe-extra}.
We experiment with both low- and high-resolution
parameters ($\delta = 2^{-10}$, $\epsilon = 1/10$;
$\delta = 2^{-20}$, $\epsilon = 1/100$).

\sitpara{Health data modeling.}
We implement the \afe for training a regression
model on private client data.
We use the features from 
a preexisting heart disease data set (13 features of varying types: age, sex,
cholesterol level, etc.)~\cite{heart-data} and 
a breast cancer diagnosis data set (30 real-valued features using
14-bit fixed-point numbers)~\cite{breast-data}.

\sitpara{Anonymous surveys.}
We configure \name to compute aggregates responses to sensitive surveys:
we use the Beck Depression Inventory (21 questions on a 1-4 scale)~\cite{beck}, 
the Parent-Child Relationship Inventory (78 questions on a 1-4 scale)~\cite{pcri}, and
the California Psychological Inventory (434 boolean questions)~\cite{cpi}.

\paragraph{Comparison to alternatives.}
In Figure~\ref{fig:geo-time}, we compare the computational
cost \name places on the client to the costs of other
schemes for protecting robustness against misbehaving clients,
when we configure the system for the aforementioned applications.
The fact that a \name client need only perform a single
public-key encryption means that it dramatically outperforms
schemes based on public-key cryptography.
If the $\Ver$ circuit has $M$ multiplication gates, 
producing a discrete-log-based NIZK requires the client to perform
$2M$ exponentiations (or elliptic-curve point multiplications).
In contrast, \name requires $O(M\log M)$ multiplications in a relatively
small field, which is much cheaper for practical values of~$M$.

\begin{figure*}
  \begin{minipage}[b]{0.42\textwidth}
      \centering
\begingroup\makeatletter

}
\captionof{table}{The throughput, in client requests per second, of a global five-server cluster running a private $d$-dim.~regression.
  We compare a scheme with no privacy,
  with privacy but no robustness,
  and \name (with both).}
\label{tab:linreg}
\end{minipage}
\end{figure*}

In Figure~\ref{fig:geo-time}, we give conservative estimates of 
the time required to generate a zkSNARK proof, based on timings of
libsnark's~\cite{ben2014succinct} 
implementation of the Pinocchio system~\cite{parno2013pinocchio} 
at the 128-bit security level.
These proofs have the benefit of being very short: 288 bytes, 
irrespective of the complexity of the circuit.
To realize the benefit of these succinct proofs, the statement
being proved must also be concise since the verifier's running time 
grows with the statement size.
To achieve this conciseness in the \name setting would require
computing $sL$ hashes ``inside the SNARK,'' with 
$s$ servers and submissions of length~$L$.

We optimistically estimate that each hash computation requires only 300
multiplication gates, using a subset-sum hash function~\cite{ajtai1996generating,ben2014scalable,goldreich1996collision,impagliazzo1996efficient}, 
and we ignore the cost of computing the $\Ver$ circuit in the SNARK.
We then use the timings from the libsnark paper to arrive at the cost estimates.
Each SNARK multiplication gate requires the client to compute a number of
exponentiations, so the cost to the client is large, though the 
proof is admirably short.

\subsection{Machine learning}
Finally, we perform an end-to-end evaluation of \name
when the system is configured to train a $d$-dimensional
least-squares regression model on private client-submitted data,
in which each training example consists of a vector of 14-bit integers.
These integers are large enough to represent vital health information, for example.

In Figure~\ref{fig:linreg}, we show the client encoding cost 
for \name, along with the no-privacy and no-robustness schemes described
in Section~\ref{sec:eval:micro}.
The cost of \name's privacy and robustness guarantees amounts to roughly
a 50$\times$ slowdown at the client over the 
no-privacy scheme due to the overhead of the
SNIP proof generation.
Even so, the absolute cost of \name to the client is small---on the order of one
tenth of a second.

Table~\ref{tab:linreg} gives the rate at which the globally 
distributed five-server cluster described in Section~\ref{sec:eval:micro} 
can process client submissions with and without privacy and robustness.
The server-side cost of \name is modest:
only a 1-2$\times$ slowdown over the no-robustness scheme,
and only a 5-15$\times$ slowdown over a scheme with
{\em no privacy at all}. 
In contrast, the cost of robustness for the state-of-the-art NIZK schemes, 
per Figure~\ref{fig:micro}, is closer to 100-200$\times$.

 \section{Discussion}
\label{sec:disc}

\paragraph{Deployment scenarios.}
\Name ensures client privacy as long as at least 
one server behaves honestly. 
We now discuss a number of deployment scenarios in which this assumption
aligns with real-world incentives. 

\sitpara{Tolerance to compromise.}
\Name lets an organization compute aggregate data about its clients
without ever storing client data in a single vulnerable location.   
The organization could run all $s$ \name servers itself, which would ensures
data privacy against an attacker who compromises
up to $s-1$ servers.

\newcommand{\imgwidthrel}{0.31\textwidth}
\begin{figure*}
\centering
\subfloat[RAPPOR~\cite{erlingsson2014rappor} provides differential privacy~\cite{DP} (not information-theoretic privacy) by adding random noise to client submissions.]{\includegraphics[width=\imgwidthrel]{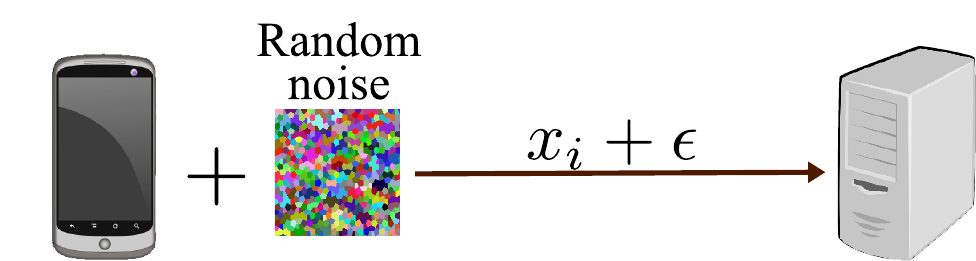}}~~~~~\subfloat[ANONIZE~\cite{hohenberger2014anonize} and PrivStats~\cite{popa2011privacy} rely on an anonymizing proxy, such as Tor~\cite{dingledine2004tor}, to protect privacy against network eavesdroppers.]{\includegraphics[width=\imgwidthrel]{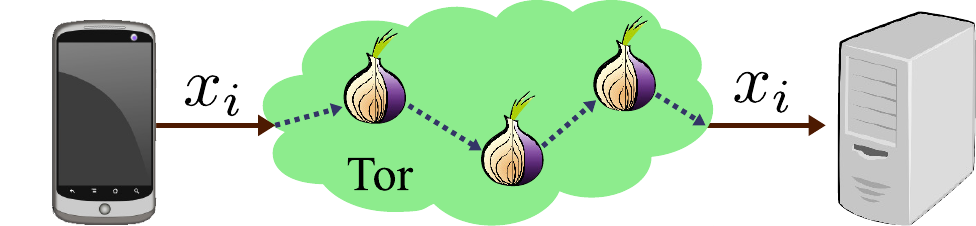}}~~~~~\subfloat[\Name and other schemes using secret sharing~\cite{melis2016,elahi2014privex,danezis2013smart,kursawe2011,jawurek2012fault,castelluccia2005efficient} offer ideal anonymity provided that the servers do not collude.]{\includegraphics[width=\imgwidthrel]{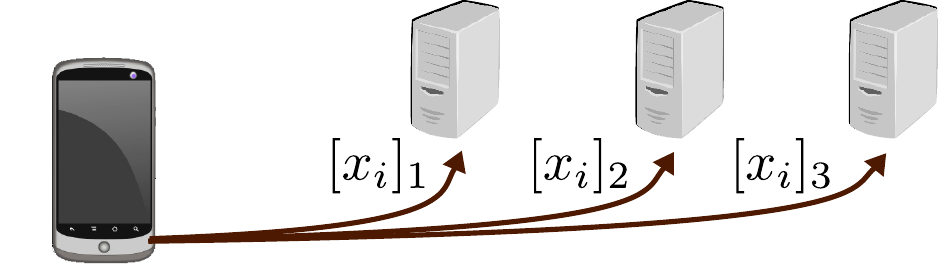}}
\caption{Comparison of techniques for anonymizing client data in private aggregation systems.}
\label{fig:network}
\end{figure*}

\sitpara{App store.}
A mobile application platform (e.g., Apple's App Store or Google's Play) can
run one \name server, and the developer of a mobile app can run the second
\name server.  This allows the app developer to collect aggregate
user data without having to bear the risks
of holding these data in the clear.

\sitpara{Shared data.}
A group of $s$ organizations could use \name to compute 
an aggregate over the union of their customers' datasets,
without learning each other's private client data. 

\sitpara{Private compute services.}
A large enterprise can contract with an external auditor
or a non-profit (e.g., the Electronic Frontier Foundation)
to jointly compute aggregate statistics over sensitive customer data using \name.

\sitpara{Jurisdictional diversity.}
A multinational organization can spread its \name servers across different
countries. If law enforcement agents seize the \name servers in one
country, they cannot deanonymize the organization's \name users.

\paragraph{Common attacks.}
\label{sec:disc:attacks}
Two general attacks apply to {\em all systems},
like \name, that produce exact (un-noised) outputs
while protecting privacy against a network adversary.
The first attack is a {\em selective denial-of-service attack}.
In this attack, the network adversary prevents all honest clients except one
from being able to contact the \name servers~\cite{serjantov2002trickle}.
In this case, the protocol output is
$f(x_\text{honest}, x_{\text{evil}_1}, \dots, x_{\text{evil}_n})$.
Since the adversary knows the $x_\text{evil}$ values, the adversary 
could infer part or all of the one honest client's private value $x_\text{honest}$.

In \name, we deploy the standard defense against this attack, which is
to have the servers wait to publish the aggregate statistic 
$f(x_1, \dots, x_n)$ until they are confident that the aggregate includes
values from many honest clients.  
The best means to accomplish this 
will depend on the deployment setting. 

One way is to have the servers keep a list of public keys of registered clients
(e.g., the students enrolled at a university).
\name clients sign their submissions with the signing key corresponding to
their registered public key and the servers wait to publish their accumulator
values until a threshold number of registered clients have submitted valid
messages. 
Standard defenses~\cite{yu2006sybilguard,yu2008sybillimit,viswanath2010analysis,alvisi2013sok}
against Sybil attacks~\cite{douceur2002sybil} would
apply here.

The second attack is an {\em intersection attack}~\cite{kedogan2002limits,berthold2002dummy,wolinsky2013hang,danezis2004statistical}.
In this attack, the adversary observes the output $f(x_1, \dots, x_n)$ of a
run of the \name protocol with $n$ honest clients.
The adversary then forces the $n$th honest client offline
and observes a subsequent protocol run, in which the servers compute $f(x'_1, \dots, x'_{n-1})$.
If the clients' values are constant over time ($x_i = x'_i$), then the adversary learns the difference
$f(x_1, \dots, x_n) - f(x_1, \dots, x_{n-1})$, which could reveal 
client $n$'s private value $x_n$ (e.g., if $f$ computes \textsc{sum}).

One way for the servers to defend against the attack is to
add differential privacy noise to the results before publishing them~\cite{DP}. 
Using existing techniques, 
the servers can add this noise in a distributed fashion to ensure that
as long as at least one server is honest,
no server sees the un-noised aggregate~\cite{dwork2006our}.
The definition of differential privacy ensures that computed
statistics are distributed approximately the same whether or not 
the aggregate includes a particular client's data. 
This same approach is also used in a system by 
Melis, Danezis, and De Cristofaro~\cite{melis2016}, which we
discuss in Section~\ref{sec:rel}. 

\paragraph{Robustness against malicious servers.}
\Name only provides robustness when all servers are honest.
Providing robustness in the face of faulty servers is obviously desirable, but we
are not convinced that it is worth the security and performance costs.
Briefly, providing robustness necessarily weakens the privacy guarantees
that the system provides: if the system protects {\em robustness} 
in the presence of $k$ faulty servers, then the system can 
protect {\em privacy} only against a coalition of at most
$s-k-1$ malicious servers.
We discuss this issue further in Appendix~\ref{app:robfault}.

\section{Related Work}
\label{sec:rel}

Private data-collection systems~\cite{elahi2014privex,castelluccia2005efficient,
danezis2013smart,jawurek2012fault,kursawe2011,duan2010,melis2016}
that use secret-sharing based methods to compute sums over private user data
typically (a) provide no robustness guarantees in the face of malicious
clients, (b) use expensive NIZKs to prevent client misbehavior, or
(c) fail to defend privacy against actively malicious servers~\cite{chen2013splitx}.

Other data-collection systems have clients send their private data to an
aggregator through a general-purpose anonymizing network, such as a
mix-net~\cite{chaum1981untraceable,danezis2004better,brickell2006efficient,kwon2015riffle}
or a DC-net~\cite{chaum1988dining,sirer2004eluding,corrigan2015riposte,corrigan2010dissent,corrigan2013proactively}.
These anonymity systems provide strong privacy properties,
but require expensive ``verifiable mixing'' techniques~\cite{bayer2012efficient,neff2001verifiable}, or
require work at the servers that is {\em quadratic} in the number of client messages
sent through the system~\cite{corrigan2015riposte,wolinsky2012dissent}.

PrivStats~\cite{popa2011privacy} and
ANONIZE~\cite{hohenberger2014anonize}
outsource to Tor~\cite{dingledine2004tor}
(or another low-latency anonymity system~\cite{le2013towards,freedman2002tarzan,reiter1998crowds})
the work of protecting privacy against a network adversary (Figure~\ref{fig:network}).
\Name protects against an adversary that can see and control the entire
network, while Tor-based schemes succumb to traffic-analysis attacks~\cite{murdoch2005low}.

In data-collection systems based on differential privacy~\cite{DP}, the client adds structured
noise to its private value before sending it to an aggregating server. 
The added noise gives the client ``plausible deniability:'' if the client sends
a value $x$ to the servers, $x$ could be the client's true private value, or it could
be an unrelated value generated from the noise.
Dwork et al.~\cite{dwork2006our}, Shi et al.~\cite{shi2011privacy},
and Bassily and Smith~\cite{bassily2015local}
study this technique in a distributed setting, and
the RAPPOR system~\cite{erlingsson2014rappor,fanti2016building},
deployed in Chromium, has put this idea into practice.
A variant of the same principle is to have a trusted proxy
(as in SuLQ~\cite{blum2005practical} and PDDP~\cite{chen2012towards}) 
or a set of minimally trusted servers~\cite{melis2016} add noise
to already-collected data.

The downside of these systems is that 
(a) if the client adds little noise, then the system does not provide much privacy, or
(b) if the client adds a lot of noise, then low-frequency events
may be lost in the noise~\cite{erlingsson2014rappor}.
Using server-added noise~\cite{melis2016} ameliorates these problems.

In theory, secure multi-party computation (MPC)
protocols~\cite{lindell2009proof,yao1986generate,goldreich1987play,ben1988completeness,beaver1990round}
allow a set of servers, with some non-collusion assumptions,
to privately compute {\em any} function over client-provided values.
The generality of MPC comes with serious bandwidth and computational costs:
evaluating the relatively simple AES circuit in an MPC
requires the parties to perform many minutes or even hours of
precomputation~\cite{damgaard2012implementing}.
Computing a function $f$ on millions of client inputs, as 
our five-server \name deployment can do in tens of
minutes, could potentially take an astronomical amount of time in a full MPC.
That said, there have been great advances in practical general-purpose MPC protocols of 
late~\cite{pinkas2009secure,bellare2013efficient,gueron2015fast,damgaard2013practical,
malkhi2004fairplay,lindell2016fast,SPDZ,mohassel2015fast,ben2008fairplaymp,bogetoft2008multiparty}.
General-purpose MPC may yet become practical
for computing certain aggregation functions that \name cannot (e.g., exact \textsc{max}),
and some special-case MPC
protocols~\cite{nikolaenko2013privacy,burkhart2010sepia,applebaum2010collaborative} 
are practical today for certain applications.

 \section{Conclusion and future work}
\Name allows a set of servers to compute aggregate statistics
over client-provided data while 
maintaining client privacy, 
defending against client misbehavior,
and performing nearly as well as
data-collection platforms that exhibit neither of these security properties.
The core idea behind \name is reminiscent of techniques used in verifiable
computation~\cite{wahby2014efficient,walfish2015verifying,williams2016strong,goldwasser2015delegating,parno2013pinocchio,gennaro2013quadratic,cormode2011verifying,ben2013snarks}, but in reverse---the client proves to a set of servers
that it computed a function correctly.
One question for future work is whether it is possible to efficiently extend \name to
support combining client encodings using a more general function than
summation, and what more powerful aggregation functions this would enable.
Another task is to investigate the possiblity of shorter SNIP
proofs: ours grow linearly in the size of the $\Ver$ circuit, 
but sub-linear-size information-theoretic SNIPs may be feasible.

\medskip

\noindent
{\small
\textbf{Acknowledgements.}
We thank the anonymous NSDI reviewers for an
extraordinarily constructive set of reviews.
Jay Lorch, our shepherd, read two drafts of this paper and
gave us pages and pages of insightful recommendations and thorough comments.
It was Jay who suggested using \name to 
privately train machine learning models, 
which became the topic of Section~\ref{sec:struct:ml}.
Our colleagues, including
David Mazi\`eres,
David J.~Wu,
Dima Kogan,
Elette Boyle,
George Danezis,
Phil Levis, 
Matei Zaharia, 
Saba Eskandarian, 
Sebastian Angel, and
Todd Warszawski
gave critical feedback that improved the content and presentation of the work.
Any remaining errors in the paper are, of course, ours alone.

This work received support from NSF, DARPA, the Simons Foundation, an
NDSEG Fellowship, and ONR. 
Opinions, findings and conclusions or recommendations expressed in this
material are those of the authors and do not necessarily reflect the views of DARPA.
\par
}

\frenchspacing
{\bibliographystyle{acm}
\bibliography{refs}
}
\nonfrenchspacing

\appendix

\section{Security definitions}
\label{app:secdefs}
\label{app:secdefs:priv}

In this section we define $f$-privacy.  
We give an informal definition, for the sake of readability,
that captures the security properties we need.
In what follows we use the standard notions of {\em negligible
  functions} and {\em computational indistinguishability} (see, e.g.,
\cite{goldreich}).  We often leave the security parameter implicit.

\begin{defn}[$f$-Privacy]
\label{defn:fpriv}
Say that there are $s$ servers in a \name deployment, 
and say that $n$ clients' values are included
in the final aggregate that the system outputs.
We say that the scheme provides $f$-{\em privacy} for a function $f$, if
for:
\begin{itemize}
\item every subset of at most $s-1$ malicious servers, and
\item every number of malicious clients $m \leq n$, 
\end{itemize} 
there exists an efficient simulator that,
for every choice of the honest clients' inputs $(x_1, \dots, x_{n-m})$,
takes as input:
\begin{itemize} 
\item the public parameters to the protocol run 
      (all participants' public keys, the description of the aggregation 
      function $f$, the cryptographic parameters, etc.),
\item the indices of the adversarial clients and servers, 
\item oracle access to the adversarial participants, and
\item the value $f(x_1, \dots, x_{n-m})$, 
\end{itemize} 
and outputs a simulation of the adversarial participants' view of the protocol
run whose distribution is computationally indistinguishable 
from the distribution of the adversary's view of the real protocol run.
\end{defn}

By engaging in the protocol, the adversary learns the 
value $f(x_1, \dots, x_{n-m})$ exactly.
It is therefore critical that the honest servers ensure that the number $n-m$ of
honest clients' values included in the aggregate is ``large enough.'' 
The honest servers must use out-of-band means to ensure that many honest clients'
values are included in the final aggregate.
See discussion of this and related issues in Section~\ref{sec:disc}.

Let $\textsf{SORT}$ be the
function that takes $n$ inputs and outputs them in 
lexicographically increasing order.

\begin{defn}[Anonymity]  \label{def:anon}
We say that a data-collection scheme provides {\em anonymity} if 
it provides $f$-privacy, 
in the sense of 
Definition~\ref{defn:fpriv},
for $f = \textsf{SORT}$.
\end{defn}

A scheme that provides this form of anonymity leaks to the adversary
the entire list of honest clients' inputs $(x_1, \dots, x_{n-m})$, but the adversary
learns nothing about which client submitted which value $x_i$.
For example, if each client submits their location via a data-collection
scheme that provides anonymity, the servers learn the list of submitted locations
$\{\ell_1, \dots, \ell_{n-m}\}$, 
but the servers learn nothing about whether honest client $x$ or $y$ is in a
particular location $\ell^*$.

\begin{defn}
\label{defn:symm}
A function $f(x_1, \dots, x_n)$ is {\em symmetric} if, for all permutations
$\pi$ on $n$ elements, the equality $f(x_1, \dots, x_n) = f(x_{\pi(1)}, \dots, x_{\pi(n)})$ holds.
\end{defn}

\begin{claim}
\label{claim:anonpriv}
Let $\D$ be a data-collection scheme that provides $f$-privacy, in the sense of
Definition~\ref{defn:fpriv}, for a symmetric function $f$.
Then $\D$ provides anonymity.
\end{claim}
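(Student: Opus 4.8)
The plan is to reduce $\textsf{SORT}$-privacy to the hypothesized $f$-privacy, exploiting the fact that for a symmetric $f$ the value $f(x_1,\dots,x_{n-m})$ is a deterministic, efficiently computable function of the sorted list $\textsf{SORT}(x_1,\dots,x_{n-m})$. First I would unwind the definitions: by Definition~\ref{def:anon}, $\D$ provides anonymity iff it provides $f$-privacy (in the sense of Definition~\ref{defn:fpriv}) for the particular function $f = \textsf{SORT}$. So the goal is to exhibit, for every coalition of at most $s-1$ servers together with any $m \le n$ malicious clients, an efficient simulator that takes the public parameters, the indices of the corrupted parties, oracle access to the adversarial participants, and the value $\textsf{SORT}(x_1,\dots,x_{n-m})$, and outputs a view computationally indistinguishable from the adversary's view of the real protocol.

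Next I would build this simulator on top of the simulator $\mathsf{Sim}_f$ guaranteed by the assumed $f$-privacy of $\D$. The new simulator $\mathsf{Sim}_{\textsf{SORT}}$ receives $\ell = \textsf{SORT}(x_1,\dots,x_{n-m})$, i.e.\ the sorted list (equivalently, the multiset) of the honest clients' inputs. Since $f$ is an efficient aggregation function it can be evaluated in polynomial time on $\ell$, and since $f$ is symmetric (Definition~\ref{defn:symm}) we have $f(\ell) = f(x_1,\dots,x_{n-m})$ no matter in which order the honest inputs were originally indexed. So $\mathsf{Sim}_{\textsf{SORT}}$ computes $v \gets f(\ell)$ and then invokes $\mathsf{Sim}_f$ on $v$ together with all of its other inputs unchanged, forwarding $\mathsf{Sim}_f$'s output as its own.

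Finally I would verify correctness. For any fixed choice of honest inputs $(x_1,\dots,x_{n-m})$, the value $v$ that $\mathsf{Sim}_{\textsf{SORT}}$ feeds into $\mathsf{Sim}_f$ is exactly $f(x_1,\dots,x_{n-m})$, so by the $f$-privacy property the distribution output by $\mathsf{Sim}_f$ — hence by $\mathsf{Sim}_{\textsf{SORT}}$ — is computationally indistinguishable from the adversary's real view. Efficiency is preserved, since we only prepended one polynomial-time evaluation of $f$ before the call to $\mathsf{Sim}_f$. This is precisely $\textsf{SORT}$-privacy, so $\D$ provides anonymity.

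I do not anticipate a real obstacle: the claim is essentially a definitional reduction, and the only point that needs to be made carefully is that symmetry of $f$ causes $f$ to factor through $\textsf{SORT}$, so that the $\textsf{SORT}$-simulator is handed strictly more information than the $f$-simulator needs and can therefore reconstruct the $f$-simulator's input. If one wanted to be fully pedantic, one could also remark that this argument shows, more generally, that $g$-privacy implies $f$-privacy whenever $f$ is efficiently computable from $g$'s output.
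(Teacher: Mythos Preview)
Your proposal is correct and follows essentially the same approach as the paper: construct the $\textsf{SORT}$-simulator by computing $f$ on the sorted list (which equals $f(x_1,\dots,x_{n-m})$ by symmetry) and then invoking the $f$-privacy simulator. Your write-up is, if anything, slightly more careful than the paper's sketch, and your closing remark that $g$-privacy implies $f$-privacy whenever $f$ factors efficiently through $g$ is a nice generalization the paper does not state explicitly.
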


\begin{proof}[Proof sketch]
The fact that $\D$ provides $f$-privacy implies the existence of a simulator
$S_\D$ that takes as input $f(x_1, \dots, x_{n-m})$, along with other public values,
and induces a distribution of protocol transcripts indistinguishable from the real one.
If $f$ is symmetric, $f(x_1, \dots, x_{n-m}) = f(x'_1, \dots, x'_{n-m})$, where
\[ (x'_1, \dots, x'_{n-m}) = \textsf{SORT}(x_1, \dots, x_{n-m}).\] 
Using this fact, we construct the simulator required for the anonymity definition:
on input $(x'_1, \dots, x'_{n-m}) = \textsf{SORT}(x_1, \dots, x_{n-m})$, compute $f(x'_1, \dots, x'_{n-m})$,
and feed the output of $f$ to the simulator $S_\D$.
The validity of the simulation is immediate.
\end{proof}

The following claim demonstrates that it really only makes sense to use
an $f$-private data collection scheme when the function $f$ is symmetric,
as all of the functions we consider in \name are.

\begin{claim}
\label{claim:nonsym}
Let $f$ be a non-symmetric function.  Then there is no anonymous data
collection scheme that correctly computes~$f$.
\end{claim}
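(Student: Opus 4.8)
The plan is to argue by contradiction, using the simulator that anonymity hands us. Suppose $\D$ is a data-collection scheme that correctly computes a non-symmetric function $f$ and also provides anonymity, i.e.\ $\textsf{SORT}$-privacy in the sense of Definition~\ref{defn:fpriv}. Since $f$ is non-symmetric, the first step is to fix a ``bad'' witness pair: inputs $(x_1,\dots,x_n)$ and a permutation $\pi$ of $\{1,\dots,n\}$ with $f(x_1,\dots,x_n)\neq f(x_{\pi(1)},\dots,x_{\pi(n)})$, while of course $\textsf{SORT}(x_1,\dots,x_n)=\textsf{SORT}(x_{\pi(1)},\dots,x_{\pi(n)})$.

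Next I would instantiate Definition~\ref{defn:fpriv} in its most trivial case: the empty set of malicious servers and $m=0$ malicious clients (the definition permits this, since it quantifies over subsets ``of at most $s-1$'' servers and over $m\le n$). Anonymity then yields an efficient simulator $S$ that, given only $\textsf{SORT}(x_1,\dots,x_n)$ and the public parameters, produces a transcript computationally indistinguishable from the real adversarial view of an honest run on inputs $(x_1,\dots,x_n)$. The single observation that makes everything work is that the adversary's view always contains the protocol's published output --- the servers publish their accumulators, so even a pure network observer recovers the aggregate --- hence the computed value is computable from the view, and by correctness it equals $f(x_1,\dots,x_n)$ with overwhelming probability.

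Now run the simulator on $\vec y:=\textsf{SORT}(x_1,\dots,x_n)$. Since $\vec y=\textsf{SORT}(x_{\pi(1)},\dots,x_{\pi(n)})$ as well, anonymity forces $S(\vec y)$ to be indistinguishable both from the real view on $(x_1,\dots,x_n)$ and from the real view on $(x_{\pi(1)},\dots,x_{\pi(n)})$; by transitivity these two real views are themselves computationally indistinguishable. But a trivial distinguisher reads the published output off a view and tests whether it equals $f(x_1,\dots,x_n)$: it accepts with overwhelming probability on the first view and with negligible probability on the second, because the true output there is $f(x_{\pi(1)},\dots,x_{\pi(n)})\neq f(x_1,\dots,x_n)$. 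This non-negligible advantage contradicts indistinguishability, which completes the argument.

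The only delicate point --- the ``main obstacle'' --- is pinning down precisely what ``the adversary's view'' contains and what ``correctly computes $f$'' means, so that the output is legitimately readable from the view. I would resolve this by appeal to the scheme's structure (published accumulators are broadcast and observable, as in Section~\ref{sec:simple}, and Definition~\ref{defn:fpriv} explicitly states that the adversary learns the computed value exactly) and by taking ``correctly computes'' to mean the published result agrees with $f$ with overwhelming probability, which is all the distinguishing step needs. Everything else is bookkeeping.
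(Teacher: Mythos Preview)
Your proof is correct and follows essentially the same approach as the paper's: fix a witness to non-symmetry, take $m=0$, observe that the published aggregate is part of the view, and argue that a simulator fed only $\textsf{SORT}(x_1,\dots,x_n)$ cannot match both real executions since they produce different outputs. Your version is slightly more careful than the paper's in spelling out the transitivity step and the explicit distinguisher needed for \emph{computational} indistinguishability, whereas the paper's sketch argues the simulator ``necessarily fails'' on one of the two inputs without naming the distinguisher; but the idea is identical.
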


\begin{proof}[Proof sketch]
Let there be no malicious clients ($m = 0$).
Because $f$ is not symmetric, there exists an input
$(x_1,\ldots,x_n)$ in the domain of $f$, and a permutation $\pi$ on~$n$ 
elements, such that $f(x_1,\ldots,x_n) \neq f(x_{\pi(1)},\ldots,x_{\pi(n)})$.

Let $\D$ be a data-collection scheme that implements the 
aggregation function $f(x_1, \dots, x_n)$.  
This $\D$ outputs $f(x_1, \ldots, x_n)$ for all $x_1,\ldots,x_n$ in the domain,
and hence $f(x_1,\ldots,x_n)$ is part of the protocol transcript.

For $\D$ to be anonymous, there must be a simulator 
that takes $\textsf{SORT}(x_1,\ldots,x_n)$ as input, and simulates
the protocol transcript. In particular, it must output $f(x_1, \ldots, x_n)$.
But given $\textsf{SORT}(x_1,\ldots,x_n)$, it will necessarily fail to
output the correct protocol transcript on either
$(x_1,\ldots,x_n)$ or $(x_{\pi(1)},\ldots,x_{\pi(n)})$.
\end{proof}

\paragraph{Robustness.}
Recall that each \name client holds a value $x_i$, where the value $x_i$
is an element of some set of data items~$\D$.
For example, $\D$ might be the set of $4$-bit integers.
The definition of robustness states that when all servers are honest,
a set of malicious clients cannot influence the final aggregate, beyond
their ability to choose arbitrary {\em valid} inputs.
For example, malicious clients can choose arbitrary $4$-bit integers
as their input values, but cannot influence the output in any other way. 

\begin{defn}[Robustness]  \label{def:robust}
Fix a security parameter $\lambda > 0$.
We say that an $n$-client \name deployment provides \textit{robustness} if, 
when all \name servers execute the protocol faithfully, 
for every number $m$ of malicious clients (with $0 \leq m \leq n$), and
for every choice of honest client's inputs $(x_1, \dots, x_{n-m}) \in \D^{n-m}$,
the servers, with all but negligible probability in $\lambda$,
output a value in the set:
\[ \Big\{ f(x_1, \dots, x_n) \mid (x_{n-m+1}, \dots, x_n) \in \D^m \Big\}. \]
\end{defn}

 \section{Robustness against faulty servers}
\label{app:robfault}
If at least one of the servers is honest, \name ensures
that the adversary learns nothing about clients' data, except
the aggregate statistic.
In other words, \name ensures {\em client privacy} if at least one of servers
is honest.

\name provides {\em robustness} only if all servers are honest.
Providing robustness in the face of faulty servers is obviously desirable, 
but we
are not convinced that it is worth the security and performance costs.
First, providing robustness necessarily weakens the privacy guarantees
that the system provides: if the system protects {\em robustness} 
in the presence of $k$ faulty servers, then the system can 
protect {\em privacy} only against a coalition of at most
$s-k-1$ malicious servers.
The reason is that, if robustness 
holds against $k$ faulty servers, then $s-k$ honest servers must 
be able to produce a correct output even if these $k$ faulty servers are offline.
Put another way: $s-k$ {\em dishonest} servers can recover the output of the system
even without the participation of the $k$ honest servers.
Instead of computing an aggregate over many clients ($f(x_1, \dots, x_n)$), 
the dishonest servers can compute the ``aggregate'' over a single client's
submission ($f(x_1)$) and essentially learn that client's private data value.

So strengthening robustness in this setting weakens privacy.
Second, protecting robustness comes at a performance cost:
some our optimizations use a ``leader'' server
to coordinate the processing of each client submission (see Appendix~\ref{app:opts}).
A faulty leader cannot compromise privacy, but \textit{can} compromise
robustness.
Strengthening the robustness property would force us to abandon these optimizations.

That said, it would be possible to extend \name to provide
robustness in the presence of corrupt servers using standard techniques~\cite{beaver1991secure}
(replace $s$-out-of-$s$ secret sharing with Shamir's
threshold secret-sharing scheme~\cite{shamir1979share}, etc.).

 \section{MPC background}
\label{app:beaver}

This appendix reviews the definition of arithmetic circuits
and Donald Beaver's multi-party 
computation protocol~\cite{beaver1991efficient}.

\subsection{Definition: Arithmetic circuits}
\label{app:beaver:circuit}
An \textit{arithmetic circuit} $\C$ over a finite field
$\F$ takes as input a vector 
$x = \langle x^{(1)}, \dots, x^{(L)} \rangle \in \F^L$
and produces a single field element as output.
We represent the circuit as a directed acyclic graph, in which
each vertex in the graph is either an {\em input}, a {\em gate}, or an {\em
output} vertex.

Input vertices have in-degree zero and 
are labeled with a variable in $\{x^{(1)}, \dots, x^{(L)} \}$ or a constant in $\F$.
Gate vertices have in-degree two and 
are labeled with the operation $+$ or $\times$.
The circuit has a single output vertex, which has out-degree zero. 

To compute the circuit $\C(x) = \C(x^{(1)}, \dots, x^{(L)})$, 
we walk through the circuit from
inputs to outputs, assigning a value in $\F$ to each wire 
until we have a value on the output wire, which is the 
value of $\C(x)$.
In this way, the circuit implements a mapping $\C : \F^L \to \F$.

\subsection{Beaver's MPC protocol}
\label{app:beaver:proto}

This discussion draws on the clear exposition by Smart~\cite{smart2011fhe}.

Each server starts the protocol holding a share $[x]_i$
of an input vector $x$.
The servers want to compute $\C(x)$, for some arithmetic circuit $\C$.

The multi-party computation protocol walks through the 
circuit $\C$ wire by wire, from inputs to outputs. 
The protocol maintains the invariant that, at the $t$-th time step, 
each server holds a share of the value on the $t$-th wire in the circuit.
At the first step, the servers hold shares of the input wires (by construction)
and in the last step of the protocol, the servers hold shares of the output wire.
The servers can then publish their shares of the output wires, which 
allows them all to reconstruct the value of $\C(x)$.
To preserve privacy, no subset of the servers must ever have enough information
to recover the value on any internal wire in the circuit.

There are only two types of gates in an arithmetic circuit (addition gates and
multiplication gates), so we just have to show how the servers can
compute the shares of the outputs of these gates from shares of the inputs.
All arithmetic in this section is in a finite field $\F$.

\smallskip
\noindent
\textit{Addition gates.} 
In the computation of an addition gate ``$y + z$'', the $i$th server holds shares 
$[y]_i$ and $[z]_i$ of the input wires and the server needs to compute a share 
of $y+z$.
To do so, the server can just add its shares locally
\[ [y+z]_i = [y]_i + [z]_i. \]

\smallskip
\noindent
\textit{Multiplication gates.} 
In the computation of a multiplication gate, the $i$th server 
holds shares $[y]_i$ and $[z]_i$ and wants to compute
a share of $y z$.

When one of the inputs to a multiplication gate is a constant, 
each server can locally compute a share of the output of the gate.
For example, to multiply a share $[y]_i$ by a constant $A \in \F$, each
server $i$ computes their share of the product as $[Ay]_i = A[y_i]$.

Beaver showed that the servers can use pre-computed {\em multiplication triples} 
to evaluate multiplication gates~\cite{beaver1991efficient}. 
A multiplication triple is a one-time-use triple of values $(a, b, c) \in \F^3$, 
chosen at random subject to the constraint that $a \cdot b = c \in \F$.
When used in the context of multi-party computation, each server~$i$ 
holds a share $([a]_i, [b]_i, [c]_i) \in \F^3$ of the triple.

Using their shares of one such triple $(a, b, c)$, the servers can jointly
evaluate shares of the output of a multiplication gate $yz$.
To do so, each server $i$ uses her shares $[y]_i$ and $[z]_i$ of the input wires,
along with the first two components of its multiplication triple to compute 
the following values:
\begin{align*}
[d]_i = [y]_i - [a]_i \quad;\quad [e]_i = [z]_i - [b]_i.
\end{align*}
Each server $i$ then broadcasts $[d]_i$ and $[e]_i$.
Using the broadcasted shares,
every server can reconstruct $d$ and $e$ and can compute:
\begin{align*}
\sigma_i = de/s + d[b]_i + e[a]_i + [c]_i.
\end{align*}
Recall that $s$ is the number of servers---a public constant---and
the division symbol here indicates division
(i.e., inversion then multiplication) 
in the field $\F$.
A few lines of arithmetic confirm that $\sigma_i$ is
a sharing of the product $yz$.
To prove this we compute:
\begin{align*}
\sum_i \sigma_i &= \sum_i \big ( de/s + d[b]_i + e[a]_i + [c]_i \big )\\
 &= de + db + ea + c\\
 &= (y - a)(z-b) + (y-a)b + (z-b)a + c\\
 &= (y - a)z + (z-b)a + c\\
 &= yz - az + az - ab + c\\
 &= yz - ab + c\\
 &= yz. 
\end{align*}
The last step used that $c = ab$ (by construction of the multiplication triple), so:
$\sum_i \sigma_i = yz$, which implies that $\sigma_i = [yz]_i$.

Since the servers can perform addition and multiplication of shared values,
they can compute any function of the client's data value
in this way, as long as they
have a way of securely generating multiplication triples.
The expensive part of traditional MPC protocols is the process by which
mutually distrusting servers generate these triples in a distributed way.

 \section{SNIP security analysis}
\label{app:zkproof}

\subsection{Soundness}
\label{app:zkproof:sound}

We prove here that the SNIP construction of Section~\ref{sec:disrupt}
is sound.
We define soundness of the SNIP construction using the following 
experiment.
The adversary's task is to produce an input $x$ and
a SNIP proof $\pi = (f(0), g(0), h, a, b, c)$ shared amongst the
servers, such that $\Ver(x) \neq 1$ and yet the servers accept 
$x$ as valid.
Assume, without loss of generality,
that the output of the $\Ver$ circuit is
the value on the output wire of the last multiplication gate
in the circuit, in topological order from inputs to outputs.

\newcommand*{\tsum}{\textstyle{\sum}}
\begin{framed}
  \noindent
  \textsf{Soundness}.\ \ In the experiment, the adversary $\A$ 
      plays the role of a dishonest client 
      interacting with honest servers.
  \begin{enumerate}
    \item Run the adversary $\A$. For each server $i$, the
          adversary outputs a set of values:
        \begin{itemize}
          \item $[x]_i \in \F^L$,
          \item $([f(0)]_i, [g(0)]_i) \in \F^2$, 
          \item $[h]_i \in \F[x]$ of degree at most $2M$, and
          \item $([a]_i, [b]_i, [c]_i) \in \F^3$.
        \end{itemize}
      \item     Choose a random point $r \gets_R \F$.
          For each server $i$, 
          compute $[f]_i$ and $[g]_i$ as in the real protocol, and
          evaluate $[f(r)]_i$, $[r\cdot g(r)]_i$, and $[r \cdot h(r)]_i$.
        \item Define the following values, where $s$ is a constant
              representing the number of servers:
          \begin{align*}
            x &= \tsum_i [x]_i &           a &= \tsum_i[a]_i\\
            f(r) &= \tsum_i [f(r)]_i &     b &= \tsum_i [b]_i\\
            r \cdot g(r) &= \tsum_i [r\cdot g(r)]_i &     c &= \tsum_i[c]_i\\
            h(M) &= \tsum_i [h]_i(M) &     d &= f(r) - a \\
                 &&                        e &= r \cdot g(r) - b
          \end{align*}
          \begin{align*}
          \sigma  &= \tsum_i \big(de/s + d[b]_i + e[a]_i + [c]_i - [r\cdot h(r)]_i \big)\\
                  &= de + db + ea + c - h(r)\\
                  &= (f(r)-a)(r\cdot g(r) - b) + (f(r) - a)b \\
                  &\quad\qquad+ (r\cdot g(r)-b)a + c - r\cdot h(r)\\
                  &= r \cdot (f(r) g(r) - h(r)) + (c - ab).
          \end{align*}

  \item We say that the adversary wins the game if
        $$h(M) = 1,\ \  \sigma = 0,\ \  \text{and}\ \ \Ver(x) \neq 1.$$ 

\end{enumerate}
\end{framed}
Let $W$ be the event that the adversary wins the 
soundness game. 
We say that the SNIP protocol is sound if, for
all adversaries $\A$, even computationally unbounded ones,
$$\Pr[W] \leq (2M+1)/|\F|,$$
where the probability is over the uniform choice of $r$ in $\F$
in Step~2.
Here $M$ is the number of multiplication gates in the
$\Ver$ circuit.  

\begin{thm}
  The SNIP protocol of Section~\ref{sec:disrupt} is sound.
\end{thm}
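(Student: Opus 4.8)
The plan is to analyze the soundness game by conditioning on the adversary's Step~1 output and then applying a Schwartz--Zippel bound against the random point $r$, which by construction of the game is drawn \emph{after} all of the adversary's choices. First I would package the adversary's output into the reconstructed polynomials $\hatf = \sum_i [f]_i$, $\hatg = \sum_i [g]_i$, $\hath = \sum_i [h]_i$ together with the scalars $x,a,b,c$, and define the single polynomial
\[
  \hat P(t) = t\cdot\bigl(\hatf(t)\hatg(t) - \hath(t)\bigr) + (c - ab).
\]
The interpolation in Step~\ref{proto:poly} forces $\deg\hatf,\deg\hatg \le M$, and the client's shared polynomial satisfies $\deg\hath \le 2M$, so $\deg\hat P \le 2M+1$. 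A few lines of arithmetic (exactly the computation displayed in Step~3 of the game) show $\sigma = \hat P(r)$, so the winning condition ``$\sigma = 0$'' is precisely ``$\hat P(r) = 0$''.

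The crux is the claim: whenever $\hath(M) = 1$ and $\Ver(x)\neq 1$, the polynomial $\hat P$ is not identically zero. I would prove this by assuming $\hat P \equiv 0$ and deriving a contradiction. Evaluating at $t=0$ gives $c = ab$, and then $t\cdot(\hatf\hatg - \hath)\equiv 0$ forces $\hatf\cdot\hatg = \hath$ as polynomials (a nonzero multiple of $t$ cannot vanish identically over a field, the same point used in Step~\ref{proto:test}a). Next I would run the consistency argument of Step~\ref{proto:poly} ``in reverse'': by induction on the topological order of the $\Ver$ circuit, the reconstructed value $\sum_i[w]_i$ on every wire $w$ equals the true value of that wire in the computation of $\Ver(x)$. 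For input wires this is immediate ($\sum_i[x]_i = x$); for addition and constant-multiplication gates it follows from the linearity of the secret-sharing scheme, since the honest servers derive those wire shares by exactly the affine maps the real evaluation uses; and for the $t$-th multiplication gate the reconstructed output is $\hath(t) = \hatf(t)\hatg(t)$, which by the inductive hypothesis equals (reconstructed left input)$\cdot$(reconstructed right input), i.e.\ the correct product. Since the output wire of $\Ver$ is the output of the last multiplication gate, this yields $\Ver(x) = \hath(M) = 1$, contradicting $\Ver(x)\neq 1$.

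Given the claim, I would finish by conditioning on the adversary's Step~1 output, which fixes $x$, $\hatf$, $\hatg$, $\hath$, $a$, $b$, $c$---and hence $\hat P$---independently of the later choice of $r$. If the fixed output has $\hath(M)\neq 1$ or $\Ver(x) = 1$, the winning event $W$ is empty; otherwise the claim gives $\hat P\not\equiv 0$, so, since $r$ is uniform in $\F$ and independent of $\hat P$ and $\hat P$ has at most $2M+1$ roots, $\Pr[\hat P(r)=0] \le (2M+1)/|\F|$. In both cases the conditional probability of $W$ is at most $(2M+1)/|\F|$, and averaging over the adversary's internal coins gives $\Pr[W]\le(2M+1)/|\F|$. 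The main obstacle is making the reverse-consistency induction fully rigorous---one has to argue carefully that the honest servers' ``reconstruct every wire share by affine operations'' procedure is, gate by gate, the very same affine map the real $\Ver$ evaluation performs, so that $\hatf\hatg = \hath$ really does propagate correctness from the multiplication-gate outputs through the entire circuit to the output wire. The degree bound on $\hat P$, the identity $\sigma = \hat P(r)$, and the final counting step are all routine.
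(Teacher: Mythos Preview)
Your proposal is correct and follows essentially the same route as the paper's proof: both express $\sigma$ as $\hat P(r)$ for $\hat P(t) = t(\hatf\hatg - \hath)(t) + (c-ab)$, both use the induction over multiplication gates to show that $\hatf\cdot\hatg = \hath$ forces $\hath(M) = \Ver(x)$, and both finish with Schwartz--Zippel using the independence of $r$ from the adversary's Step~1 output. The only cosmetic difference is that the paper case-splits on whether $\hatf\cdot\hatg = \hath$, whereas you package the same two cases into a single contrapositive claim.
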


\begin{proof}
First, consider the probability that $\sigma = 0$,
over the choice of $r$ in $\F$.
We can express the value of $\sigma$ as the evaluation of
the following polynomial $P(t)$ at the point $r$:
\begin{align*}
P(t) &= t \cdot (f(t) \cdot g(t) - h(t)) + (c - ab)\\
    &= t \cdot Q(t) + (c - ab),
\end{align*}
for some polynomial $Q(t)$.
If $f \cdot g \neq h$, then $Q$ is non-zero and the polynomial $P$ is a
non-zero polynomial of degree at most $2M + 1$, no matter 
how $Q$ and $(c - ab)$ are related.
Furthermore, we know that the
choice of $r$ is independent of
$(a, b, c)$ and $Q = f \cdot g - h$, since the adversary must produce
these values in the experiment before $r$ is chosen.
The choice of $r$ is then independent of the polynomial $P$.
Since $P$ is a non-zero polynomial of degree at most $(2M+1)$, $P$ can have at most 
$2M+1$ zeros in $\F$.
This implies that the probability that $P(r) = \sigma = 0$, for
a random $r$ independent of $P$, is at most $(2M+1)/|\F|$.
So, whenever $f \cdot g \neq h$,
$\Pr[W] \leq (2M+1)/|\F|$.

Let us now consider the case when $f \cdot g = h$.
Label the gates in the $\Ver$ circuit in topological 
order from inputs to outputs.
\begin{claim}\label{claim:mulgate}
  When $f \cdot g = h$, the value $h(t)$ is equal
  to the output of the value on the output wire
  of the $t$-th multiplication gate in the $\Ver(x)$
  circuit.
\end{claim}
\begin{proof}
  By induction on $t$.

  \paragraph{Base case $(t=1)$.}
  The left and right inputs to the first multiplication gate in the
  circuit are affine functions $A_{L,1}$ and $A_{R,1}$ of the input $x$.
  (These functions could also be constant functions.)
  In constructing the polynomials $f$ and $g$ in the experiment,
  we set:
  \[ f(1) = A_{L,1}(x) \qquad \text{and} \qquad g(1) = A_{R,1}(x).\]
  Since $f\cdot g = h$, we have that
  \[h(1) = f(1) \cdot g(1) = A_{L,1}(x) \cdot A_{R,1}(x),\] and 
  $h(1)$ is truly the value on the output wire 
  of the first multiplication gate in the
  $\Ver(x)$ circuit.

  \paragraph{Induction step.}
  The left and right inputs to the $t$th 
  multiplication gate are affine functions
  $A_{L,t}$ and $A_{R,t}$ of $x$ and the outputs of 
  all prior multiplication gates in the circuit:
  $(h(1), \dots, h(t-1))$.
  By the induction hypothesis, the values $(h(1), \dots, h(t-1))$
  are the values on the output wires of the first $t-1$ multiplication
  gates in the $\Ver$ circuit.
  In constructing $f$ and $g$, we set:
  \begin{align*}
    f(t) &= A_{L,t}(x, (h(1), \dots, h(t-1))\\
    g(t) &= A_{R,t}(x, (h(1), \dots, h(t-1)).
  \end{align*}
  These values are exactly the values on 
  the left and right input wires to the
  $t$th multiplication gate in the $\Ver$ circuit.
  Finally, since $f\cdot g = h$, we have that
  ${h(t) = f(t) \cdot g(t)}$, so the value $h(t)$ is equal to the value
  on the output wire of the $t$th multiplication gate in the
  $\Ver$ circuit.
\end{proof}
Now, by Claim~\ref{claim:mulgate}, when $f\cdot g = h$,
the value of $h(M)$ is the output of the $M$th multiplication
gate in the $\Ver(x)$ circuit, which is actually the value
on the output wire of the circuit.
(Here we assume without loss of generality 
that the output value of the last multiplication
gate is value that the circuit outputs.)
The probability that $h(M) = 1$ when $\Ver(x) \neq 1$ is
then zero.
So, whenever $f \cdot g = h$, 
$\Pr[W] = 0$. 

We have shown that $\Pr[W] \leq (2M + 1)/|\F|$ no matter whether 
$f \cdot g = h$.
This concludes the proof of the theorem.
\end{proof}

\subsection{Zero knowledge property}
\label{app:zkproof:zero}

In this section, we show that as long as one server
is honest, an adversary who controls the remaining 
servers learns nothing about the client's data
by participating in the SNIP verification protocol.
Informally, we show that for all 
$x \in \F^L$ such that $\Ver(x) = 1$, 
an adversary who controls
a proper subset of the servers 
can perfectly simulate its interaction with the honest parties
(the client and the honest servers). 
The simulation is generated without knowledge of the client's data~$x$,
proving that the adversary learns nothing about the client's data
from the protocol. 

We first precisely define the notion of security that
we require from our SNIP construction.
We define a ``real'' world experiment, in which the adversary is given
data as in the real protocol,
and an ``ideal'' world experiment, in which the
adversary is given data generated by a simulator. 
We say that the SNIP protocol maintains the zero-knowledge
property if the adversary---even if the adversary receives $x$ as input---has no advantage at distinguishing these two experiments.

Since the role of all servers is symmetric, and since all values exchanged in
the protocol are shared using additive secret sharing, it is without loss of
generality to prove security for the case of a single adversarial server
interacting with a single honest server.
For a value $v \in \F$ we use $\Ash{v}$ and $\Hsh{v}$ to denote the 
adversarial and honest parties'
shares of $v$.  Then $\Ash{v} + \Hsh{v} = v \in \F$.

In what follows, 
we denote
the single Beaver triple that the servers use to execute the protocol
by $(a, b, c) \in \F^3$.
We let $r \in \F$ be the point 
that the servers use to execute the polynomial identity test.
We assume that the $\Ver$ circuit has $M$ multiplication gates,
and that the output of the $M$th multiplication gate is the 
value that the circuit outputs.

\begin{framed}
\noindent
\textsf{Real-world experiment.}\ \ 
The experiment is parameterized by an
input $x \in \F^L$ and a point $r \in \F$.
The experiment is with 
an adversary $(\A_1, \A_2)$, 
representing the malicious server.

\begin{enumerate}[noitemsep,nolistsep]
  
    \item Choose random $(a, b, c) \in \F^3$, subject
        to the constraint that $a \cdot b = c$.
      Construct polynomials $f$, $g$, and $h$ over $\F$
            as in the real protocol.
      Compute $f(r)$, $g(r)$, and $h(r)$.
      Split the values $a$, $b$, $c$, $x$, $f(0)$, $g(0)$, and $h$ into shares and
      compute
      \begin{align*}
      \Hsh{d} &= \Hsh{f(r)} - \Hsh{a} \quad\in \F \\
      \Hsh{e} &= \Hsh{r \cdot g(r)} - \Hsh{b} \quad\in \F \\
      \textsf{msg}_1 &= \big( \Ash{a}, \Ash{b}, \Ash{c}, \Ash{x}, \Ash{f(0)}, \Ash{g(0)}, \\
                  &\qquad\qquad\Ash{h}, \Hsh{h(M)}, \Hsh{d}, \Hsh{e}, \big).
    \end{align*}
    Note that $\Hsh{h(M)}$ is the honest server's share of the output wire
    of the $\Ver(x)$ circuit.

  \item 
      Run the first-stage adversary $\A_1$:
\[
  (\state_\A, \Ash{\hat{d}}, \Ash{\hat{e}}) \gets \A_1(x, r, \textsf{msg}_1).
\]

\item Compute the honest parties' response by setting
  $\hat{d} = \Ash{\hat{d}} + \Hsh{d}$ and
  $\hat{e} = \Ash{\hat{e}} + \Hsh{e}$, and computing
  \[\Hsh{\hat{\sigma}} = \hat{d}\hat{e}/2 + \hat{d}\Hsh{b} + \hat{e}\Hsh{a} + \Hsh{c} - \Hsh{r \cdot h(r)} \in \F.\]
\item Output \quad
      $\hat{\beta} \gets \A_2(\state_\A, \Hsh{\hat{\sigma}}) \ \ \in \{0,1\}$.
\end{enumerate}
\end{framed}

\begin{framed}
\noindent
\textsf{Ideal-world experiment.}\ \ 
This experiment is
parameterized by an
input $x \in \F^L$ and a point $r \in \F$.
The experiment is with 
an adversary $(\A_1, \A_2)$ and a simulator $S = (S_1, S_2)$.
The simulator is only given $r \in \F$ as input.

\begin{enumerate}\item Run the simulator:\ \  $(\state_S, \textsf{msg}_1) \gets S_1(r)$.
\item 
  $(\state_\A, \Ash{\hat{d}}, \Ash{\hat{e}}) \gets \A_1\big(x, r, \textsf{msg}_1 \big)$. 
\item Run the simulator:\\
  $\Hsh{\hat{\sigma}} \gets S_2(\state_S, \Ash{\hat{d}}, \Ash{\hat{e}})$.
\item Output $\hat{\beta} \gets \A_2(\state_\A, \Hsh{\hat{\sigma}}) \in \{0,1\}$.
\end{enumerate}
\end{framed}

We let $\textrm{Exp}_\textrm{real}(x,r)$
and $\textrm{Exp}_\textrm{ideal}(x,r)$
be a random variable indicating the adversary's output in
the real-world and ideal-world experiments, respectively.

We say that the SNIP protocol is zero knowledge if no adversary
$\A$, even a computationally unbounded one, 
can distinguish the real-world experiment from the
ideal-world experiment.
That is, for all $x \in \F^L$ such that
$\Ver(x) = 1$, and for all $r \not \in \{1, \dots, M\} \in \F$, 
we require that:
\begin{equation} \label{eq:snipzk}
\big|\Pr[ \textrm{Exp}_\textrm{real}(x,r)  = 1]  - 
       \Pr[ \textrm{Exp}_\textrm{ideal}(x,r) = 1] \big| = 0. 
\end{equation}

If the SNIP protocol satisfies this notion of security, even if the
adversary has a candidate guess $x^* \in \F$ for
the client's private value $x \in \F^L$ (where $\Ver(x) = 1$), 
the adversary learns
nothing about $x$ by participating in the SNIP protocol.
In particular, the adversary does not learn whether her guess
was correct, because her entire view can be simulated perfectly
without knowledge of~$x$. 
Condition~{(\ref{eq:snipzk})} implies a notion of 
semantic security: for any two valid $x_0, x_1 \in \F^L$, 
the adversary's view of the protocol is sampled identically, whether
the client uses $x_0$ or $x_1$ to execute the SNIP protocol.

Note that we only require the zero knowledge property to hold when 
$r \not \in \{1, \dots, M\}$.
Since we use a field $\F$ such that $|\F| \gg 2M$, the 
probability that a uniformly random $r$ falls 
into this bad set is negligibly small.
To entirely avoid this bad event, the servers could sample $r$ with
the constraint that $r \not\in \{1,\dots,M\}$, which would increase the soundness
error of the proof system slightly, to $1/(|\F| - M)$.

\medskip

To demonstrate that our SNIP construction satisfies the
zero-knowledge property, we must produce a simulator
$S = (S_1, S_2)$ that renders the adversary unable to 
distinguish the real and ideal worlds.

We first describe our simulator, and then argue that
the simulation is accurate.
\begin{framed}
  \noindent
\textsf{Simulator}.\ \ 
The simulator $S_1(r)$ first 
\begin{itemize}
  \item chooses random $(a, b, c) \gets_R \F^3$, subject to the
        constraint that $a \cdot b = c$,
  \item chooses $\Ash{x} \gets_R \F^L$,
  \item chooses $f(r),\ g(r) \gets_R \F$, 
  \item chooses $\Ash{f(0)},\ \Ash{g(0)} \gets_R \F$, and
  \item chooses $\Ash{h} \gets_R \F[x]$ of degree at most~$2M$ 
        by choosing coefficients at random from~$\F$. 
\end{itemize}
The simulator then sets:
\begin{align*}
  \Hsh{h(M)} &\gets (1 - \Ash{h}(M)) &&\in \F,\\
  h(r) &\gets f(r) \cdot g(r) &&\in \F, \text{ and}\\
  \Hsh{r \cdot h(r)} &\gets r \cdot (h(r) - \Ash{h}(r)) &&\in \F.
\end{align*}
The simulator computes shares
$(\Ash{a}, \Ash{b}, \Ash{c})$ of the multiplication triple,
shares $\Hsh{f(r)}$ and $\Hsh{g(r)}$, and the values
\begin{align*}
\Hsh{d} &= \Hsh{f(r)} - \Hsh{a} &&\in \F \\
\Hsh{e} &= \Hsh{r \cdot g(r)} - \Hsh{b} &&\in \F.
\end{align*}
Finally, the simulator produces the state
\[\state = \big(\, \Hsh{a}, \Hsh{b}, \Hsh{c}, \Hsh{r \cdot h(r)}\, \big).\]
and outputs
\begin{multline*}
\big(\state,\ (\Ash{a}, \Ash{b}, \Ash{c}, \Ash{x}, \\\Ash{f(0)}, \Ash{g(0)}, \Ash{h}, \Hsh{h(M)}, \Hsh{d}, \Hsh{e})\big).
\end{multline*}

\noindent
The simulator $S_2$, on input $(\textsf{state}, \Ash{\hat{d}}, \Ash{\hat{e}})$
computes $\hat{d} = \Ash{\hat{d}} + \Hsh{d}$ and 
$\hat{e} = \Ash{\hat{e}} + \Hsh{e}$, and 
outputs the unique value $\Hsh{\hat{\sigma}} \in \F$:
\[ \Hsh{\hat{\sigma}} \gets \hat{d}\hat{e}/2 + \hat{d}\Hsh{b} + \hat{e}\Hsh{a} + \Hsh{c} - \Hsh{r\cdot h(r)}. \]
\end{framed}

\begin{thm}
The adversary's view in the real world is distributed
identically to the adversary's view in the ideal world,
in which the adversary interacts with the simulator~$S$.
Therefore~{(\ref{eq:snipzk})} holds. 
\end{thm}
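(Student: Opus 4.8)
The plan is to give a hybrid-free, purely distributional argument: show that the adversary's view in the real-world experiment and in the ideal-world experiment are sampled from exactly the same distribution, conditioned on $x$ and $r$. As the excerpt already notes, by symmetry of the servers and the linearity of additive secret sharing it suffices to treat one honest server versus one malicious server. In that setting the malicious server's entire view consists of $\textsf{msg}_1 = \big(\Ash{a},\Ash{b},\Ash{c},\Ash{x},\Ash{f(0)},\Ash{g(0)},\Ash{h},\Hsh{h(M)},\Hsh{d},\Hsh{e}\big)$, the pair $(\Ash{\hat d},\Ash{\hat e})$ it chooses after seeing $\textsf{msg}_1$, and the honest server's single reply $\Hsh{\hat\sigma}$. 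I would match these in two stages: first $\textsf{msg}_1$ (which immediately gives the match for $(\textsf{msg}_1,\Ash{\hat d},\Ash{\hat e})$, since $\A_1$ is applied to identically distributed inputs), and then $\Hsh{\hat\sigma}$ conditioned on everything before it.

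For $\textsf{msg}_1$: the seven components $\Ash{a},\Ash{b},\Ash{c},\Ash{x},\Ash{f(0)},\Ash{g(0)},\Ash{h}$ are, in the real protocol, the malicious server's additive shares of seven values, hence jointly uniform and independent over $\F\times\F\times\F\times\F^L\times\F\times\F\times\F[x]_{\le 2M}$ regardless of the underlying secrets; the simulator samples exactly such a tuple, so this part agrees. Next, $\Hsh{h(M)}=h(M)-\Ash{h}(M)$, and because the client is honest and $\Ver(x)=1$, Claim~\ref{claim:mulgate} gives $h(M)=\Ver(x)=1$, so $\Hsh{h(M)}=1-\Ash{h}(M)$ is a fixed function of $\Ash{h}$ in the real world; the simulator sets it to the same value. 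Finally I would argue that, conditioned on the already-matched part, $(\Hsh{d},\Hsh{e})$ is uniform on $\F^2$ in both worlds: writing $\Hsh{f(r)}=f(r)-\Ash{f(r)}$ (linearity of the wire-share reconstruction and interpolation, with $\Ash{f(r)}$ the protocol-prescribed share, a fixed function of $\Ash{x},\Ash{f(0)},\Ash{h}$), and noting $f(r)=\ell_0(r)\cdot f(0)+(\text{function of }x)$ with $\ell_0(r)\neq 0$ precisely because $r\notin\{1,\dots,M\}$, the random pad $f(0)=u_0$ makes $f(r)$ uniform and independent of the conditioning; likewise $g(r)$ via $g(0)=v_0$, and $\Hsh{a}=a-\Ash{a}$, $\Hsh{b}=b-\Ash{b}$ are uniform and independent via the honest triple shares, so $\Hsh{d}=\Hsh{f(r)}-\Hsh{a}$ and $\Hsh{e}=\Hsh{rg(r)}-\Hsh{b}$ are jointly uniform. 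The simulator samples $f(r),g(r),a,b$ and the shares to produce the same joint law. Hence $(\textsf{msg}_1,\Ash{\hat d},\Ash{\hat e})$ is identically distributed in the two experiments.

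The crux is $\Hsh{\hat\sigma}$. Substituting $\Hsh{a}=a-\Ash{a}$, $\Hsh{b}=b-\Ash{b}$, $\Hsh{c}=ab-\Ash{c}$ (using $c=ab$, which holds in both worlds), $\Hsh{rh(r)}=r\big(f(r)g(r)-\Ash{h}(r)\big)$ (using $h(r)=f(r)g(r)$, which holds in both worlds — in the real world because the honest client sets $h=f\cdot g$, in the ideal world by the simulator's definition), and $\hat d=\Ash{\hat d}+\Hsh{d}$, $\hat e=\Ash{\hat e}+\Hsh{e}$ into the formula $\Hsh{\hat\sigma}=\hat d\hat e/2+\hat d\Hsh{b}+\hat e\Hsh{a}+\Hsh{c}-\Hsh{rh(r)}$, I would carry out the same cancellation the paper uses for soundness. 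Introducing the two ``deviation'' quantities $\delta=\Ash{\hat d}-\Ash{f(r)}+\Ash{a}$ and $\epsilon=\Ash{\hat e}-\Ash{rg(r)}+\Ash{b}$ (both fixed functions of $\textsf{msg}_1$ and $(\Ash{\hat d},\Ash{\hat e})$, and both zero iff the malicious server behaved honestly in the Beaver step), an explicit computation collapses everything to $\Hsh{\hat\sigma}=\Phi(\text{view})+a\epsilon+b\delta$ for an explicit function $\Phi$ of the conditioning only. Now conditioned on the full view $(\textsf{msg}_1,\Ash{\hat d},\Ash{\hat e})$: in the real world the only remaining randomness feeding $\Hsh{\hat\sigma}$ is $(u_0,v_0,a,b)$ subject to the two linear constraints fixed by $\Hsh{d},\Hsh{e}$, which leaves $(a,b)$ uniform and independent on $\F^2$; in the ideal world the remaining randomness is $(f(r),g(r),a,b)$ subject to two linear constraints, again leaving $(a,b)$ uniform on $\F^2$. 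Therefore, if $\delta=\epsilon=0$, $\Hsh{\hat\sigma}=\Phi(\text{view})$ is determined identically in both worlds, and otherwise $a\epsilon+b\delta$ is a nonzero linear functional of a uniform $(a,b)$, so $\Hsh{\hat\sigma}$ is uniform on $\F$ given the view in both worlds. Either way the conditional law of $\Hsh{\hat\sigma}$ matches, and since $\state_\A$ is conditionally independent of this fresh randomness given the view, the joint $(\state_\A,\Hsh{\hat\sigma})$ — hence $\hat\beta$ — is identically distributed, establishing \eqref{eq:snipzk}.

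I expect the main obstacle to be exactly the malicious-server deviation case ($\delta,\epsilon\neq 0$): one must verify the algebraic identity $\Hsh{\hat\sigma}=\Phi(\text{view})+a\epsilon+b\delta$ and, more importantly, confirm that the view has \emph{not} pinned down $a$ and $b$ — i.e.\ that the honest server's Beaver-triple shares together with the client's random pads $f(0),g(0)$ still leave $(a,b)$ uniform after conditioning. This is what makes the honest server's reply an information-theoretically perfect one-time pad on the ``cheating offset'', which in turn is what lets the simulator reproduce it without knowing $x$. The rest is routine linear algebra over $\F$ and bookkeeping of which quantities are functions of the view.
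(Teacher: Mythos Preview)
Your proposal is correct and follows essentially the same approach as the paper's proof. The paper also matches $\textsf{msg}_1$ component-by-component (uniform shares, $\Hsh{h(M)}=1-\Ash{h}(M)$ from $\Ver(x)=1$, and $\Hsh{d},\Hsh{e}$ uniform via one-time-pad masking), then proves an algebraic identity for $\Hsh{\hat\sigma}$ and splits on whether the adversary deviated.

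The only notable difference is the choice of parametrization for the two degrees of residual randomness. The paper writes the key identity as
\[
\Hsh{\hat\sigma} = f(r)\Delta_e + r\cdot g(r)\Delta_d + \Delta_d\Delta_e - \Ash{\hat\sigma},
\]
and argues that $(f(r),g(r))$ remains uniform and independent of the first flow (because $a,b$ act as one-time pads on them inside $\Hsh{d},\Hsh{e}$). You instead parametrize by $(a,b)$ and write $\Hsh{\hat\sigma}=\Phi(\text{view})+a\epsilon+b\delta$. These are the same statement under the view-measurable change of variables $f(r)=d+a$, $r\cdot g(r)=e+b$: substituting gives $a\Delta_e+b\Delta_d$ plus the view-measurable term $d\Delta_e+e\Delta_d+\Delta_d\Delta_e-\Ash{\hat\sigma}$, which is exactly your $\Phi$. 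Your conditioning argument (``$(u_0,v_0,a,b)$ uniform subject to two linear constraints leaves $(a,b)$ uniform'') is the contrapositive of the paper's independence argument. Both buy the same thing, and the case split ($\delta=\epsilon=0$ gives $\Hsh{\hat\sigma}=-\Ash{\hat\sigma}$, otherwise uniform) is identical.
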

\begin{proof}
As a first step, notice that in the real world, the 
values $f(r)$ and $g(r)$ are distributed independently and
uniformly at random, conditioned on all the information that the
adversary has at the start of the experiment.
To see this, write the value $f(r)$ in terms
of the Lagrange interpolating polynomials $(\lambda_0(r), \dots, \lambda_M(r))$:
\begin{equation} \label{eq:lag}
 f(r) = \lambda_0(r) \cdot f(0) + \sum_{i=1}^M \lambda_i(r) \cdot f(i).
\end{equation}
Because the adversary knows $x$, it knows $f(1), \ldots, f(M)$.
However, in the real world, we
choose the value of $f(0) \in \F$ independently and uniformly at random.
Further, the value $\lambda_0(r)$ is non-zero for
all $r \not \in \{1, \dots, M\}$.
Therefore, by~{(\ref{eq:lag})} we have that $f(r)$ is uniform in $\F$
and is independent of the adversary's view. 
The same argument applies to $g(r)$.
Thus, for all relevant choices of $r$, $f(r)$ and $g(r)$ will be distributed uniformly
and independently of the adversary's view at the start of the experiment.
The simulator's choices of $f(r)$ and $g(r)$ are then distributed exactly as in
the real world.

\medskip

Next, notice that the simulation in the first flow of the experiment
of the shares of $a$, $b$, $c$, $x$, $f(0)$, $g(0)$, and $h$ is perfect, 
since the secret-sharing scheme perfectly hides these values in the real world.

The simulation of the share $\Hsh{h(M)}$ is also perfect.
Since $h(M)$ is the value on the output wire of the $\Ver$ circuit,
and since $\Ver(x) = 1$ for all honest clients, it will always
be true in the real world that $h(M) = 1$.
In particular, in the real world, the value $\Hsh{h(M)}$ that
the honest client sends satisfies
${\Ash{h}(M) + \Hsh{h(M)} = 1 \in \F}$ always.
The simulated value preserves this equality exactly.

The simulation of the shares of $d$ and $e$ is also perfect,
since $\Hsh{d}$ and $\Hsh{e}$ are masked by $\Hsh{a}$
and $\Hsh{b}$, which are equal to 
\[ \Hsh{a} = a - \Ash{a}\qquad\text{and}\qquad \Hsh{b} = b - \Ash{b},\]
for uniformly random values $a$ and $b$ unknown to the adversary.

\medskip

Now, the only remaining claim we need to argue is that the simulation of
the final value $\Hsh{\hat{\sigma}}$ is consistent with its value in the
real-world experiment.
To do so, we need to argue that, irrespective of the adversary's
choice of $\Ash{\hat{d}}$ and $\Ash{\hat{e}}$, and conditioned on all
other values the adversary knows, the simulated value of $\Hsh{\hat{\sigma}}$
is distributed as in the real world.
 
The values that the adversary should have sent, were it honest,
in the second step of the real-world experiment are: 
\[ \Ash{d} = \Ash{f(r)} - \Ash{a} \qquad \text{and} \qquad \Ash{e} = \Ash{r\cdot g(r)} - \Ash{b}.\]
Say that instead the adversary shifts these values by $\Delta_d$ and $\Delta_e$.
The reconstructed values $\hat{d}$ and $\hat{e}$ will then also be shifted:
\[ \hat{d} = d + \Delta_d \qquad \text{and} \qquad \hat{e} = e + \Delta_e.\]

\newcommand\hatsig{\Hsh{\hat{\sigma}}}
In this case, the value that the adversary receives in the real-world experiment
from the honest parties is:
\begin{align*}
  \hatsig &= \hat{d}\hat{e}/2 + \hat{d}\Hsh{b} + \hat{e}\Hsh{a} + \Hsh{c} - \Hsh{r \cdot h(r)}.
\end{align*}
Define the corresponding adversarial share:
\begin{equation}  \label{eq:simadv}
  \Ash{\hat{\sigma}} = \hat{d}\hat{e}/2 + \hat{d}\Ash{b} + \hat{e}\Ash{a} + \Ash{c} - \Ash{r \cdot h(r)}
\end{equation}

\begin{claim}
We have that
\begin{equation} \label{eq:sim}
 \Hsh{\hat{\sigma}} = f(r)\Delta_e + r\cdot g(r)\Delta_d + \Delta_d \Delta_e - \Ash{\hat{\sigma}}.
\end{equation}
\end{claim}
\begin{proof}[Proof of claim]
Consider the sum:
\begin{align*}
  \Ash{\hat{\sigma}} + \Hsh{\hat{\sigma}} &= \hat{d}\hat{e} + \hat{d}b + \hat{e}a + c - r\cdot h(r)\\
                              &= \begin{multlined}[t][5.5cm]
                                  (d + \Delta_d) (e + \Delta_e) + (d + \Delta_d)b \\ +(e + \Delta_e)a + ab - r\cdot h(r)
                                  \end{multlined}\\
                              &= \begin{multlined}[t][5.5cm]
                                  de + e\Delta_d + d\Delta_e + \Delta_e \Delta_d + db\\ + b\Delta_d + ea + a\Delta_e + ab - r\cdot h(r)
                                  \end{multlined}\\
                              &= \begin{multlined}[t][5.5cm]
                                  (de + db + ea + ab) + d\Delta_e + e\Delta_d \\+ \Delta_d \Delta_e + b\Delta_d + a\Delta_e - r\cdot h(r).
                                  \end{multlined}
                                \end{align*}
As in Beaver's protocol (Appendix~\ref{app:beaver:proto}), $r \cdot f(r)\cdot g(r) = (de + db + ea + ab)$. 
Additionally, $f(r) \cdot g(r) = h(r)$ for all honest clients, so
\begin{align*}
  \Ash{\hat{\sigma}} + \Hsh{\hat{\sigma}} &= \begin{multlined}[t][5.5cm]
                                  r\cdot f(r) g(r) + d\Delta_e + e\Delta_d \\+ \Delta_d \Delta_e + b\Delta_d + a\Delta_e - r\cdot h(r)
                                  \end{multlined}\\
                              &= d\Delta_e + e\Delta_d + \Delta_d \Delta_e + b\Delta_d + a\Delta_e \\
                              &= \begin{multlined}[t][5.5cm]
                                    (f(r) - a)\Delta_e + (r\cdot g(r) -b)\Delta_d \\+ \Delta_d \Delta_e + b\Delta_d + a\Delta_e
                                  \end{multlined}\\
                              &= f(r)\Delta_e + r\cdot g(r)\Delta_d + \Delta_d \Delta_e.
\end{align*}
Which proves~{(\ref{eq:sim})}.
\end{proof}

Now, 
the adversary knows $\Delta_d$ and $\Delta_e$ and, using the information that the adversary
has learned after the first flow in the experiment, the adversary can compute $\Ash{\hat{\sigma}}$ using~{(\ref{eq:simadv})}.
We know, by our arguments thus far, that
\begin{enumerate}[noitemsep,nolistsep]
  \item the adversary has no information about $f(r)$ and $g(r)$ at the start 
        of the experiment,
  \item the simulation of the first flow is perfect, and
  \item the contents of the first flow are independent of $f(r)$ and $g(r)$.
\end{enumerate}
These three points together demonstrate that whatever the adversary has learned
after seeing the first flow, it is independent of $f(r)$ and $g(r)$.
So, $\Delta_d$, $\Delta_e$, and $\Ash{\hat{\sigma}}$ are independent of $f(r)$ and $g(r)$,
and $f(r)$ and $g(r)$ are independent random values.

\noindent
In the real world, there are two cases:
\begin{itemize}
\item When $\Delta_d = \Delta_e = 0$, we have that $\Ash{\hat{\sigma}} = - \Hsh{\hat{\sigma}}$.

\item Otherwise, $f(r)\Delta_e + r\cdot g(r)\Delta_d$ is
a random value independent of everything else that the
adversary has seen so far. 
Therefore by~{(\ref{eq:sim})}, so is
$\Hsh{\hat{\sigma}}$.
\end{itemize}
Our simulation in the ideal world produces the same behavior.
To see this, observe that the simulator computes $\Hsh{\hat{\sigma}}$
exactly as in the real world, as long as all of the values upon 
which $\Hsh{\hat{\sigma}}$ depends 
are distributed as in the real world, and we have 
already argued that they are.
This completes the proof of the theorem.
\end{proof}

\paragraph{Why randomize the polynomials?}
If the client did not choose $f(0)$ and $g(0)$
at random in $\F$, that is, if the polynomials $f$ and $g$ were
completely determined by the input $x \in \F^L$, then the simulation
would fail.  The reason is that $f(r)$ and $g(r)$ would then be known
to the adversary (because the adversary knows $x$ in 
the real-world experiment), but the simulator (in the ideal world)
would not know $x$.  
As a result, when $\Delta_d \neq 0$ or $\Delta_e \neq 0$,
the simulator $S_2$ could not generate the
correct $\Hsh{\hat{\sigma}}$ to satisfy~{(\ref{eq:sim})}, 
and the simulation would fail.  
By having the client choose $f(0)$ and $g(0)$ at random we ensure
that $f(r)$ and $g(r)$ are independent of the adversary's view,
which enables the simulator to properly generate $\Hsh{\hat{\sigma}}$.

This simulation failure is not just an artifact of our proof technique:
if the SNIP used non-randomized polynomials $f$ and $g$, 
a malicious server could learn non-trivial information about $x$
by participating in the protocol.
The idea of the attack is that a malicious verifier could, 
by deviating from the protocol, learn whether $\Ver(\hat{x}) = 1$, 
for some value $\hat{x}$ related to $x$.
To do so, the adversary would essentially shift its share of $x$
to a share of $\hat{x}$ and then would execute the rest of the SNIP
protocol faithfully.
To compute the shifted share of $x$, the adversary 
would locally compute: $\Ash{\hat{x}} = \Ash{x} + \Delta_x$.
(The adversary also would need to shift its share of the polynomial $h$
to a share of $\hat{h}$, but this is also possible if the polynomials
$f$, $g$, and $h$ are not randomized.)
At the end of the SNIP verification protocol, the adversary learns
whether $\Ver(\hat{x}) = 1$, rather than whether $\Ver(x) = 1$.

Leaking $\Ver(\hat{x})$ to the adversary can potentially leak all of
$x$ to the adversary. 
For example, if $x$ is a single element of $\F$ and $\Ver(x)$ tests whether
$x \in \{0,1\}$, then learning $\Ver(\hat{x})$, for $\hat{x} = x+1$ leaks
$x$ entirely. That is, if ${x = 0}$, then 
\[\Ver(\hat{x}) = \Ver(x+1) = \Ver(1) = 1,\]
and if ${x = 1}$, then 
\[\Ver(\hat{x}) = \Ver(x+1) = \Ver(2) = 0.\]
An adversary who can learn $\Ver(\hat{x})$ can then completely recover~$x$.
A variant of the attack allows the adversary to test some related
predicate $\widehat{\Ver}(x)$ (instead of $\Ver(x)$) on the client's input,
which has essentially the same effect.

Randomizing the polynomials $f$ and $g$ prevents a malicious server
from learning side information about $x$ in this way.

 \section{Server-side $\Ver$ computation}
\label{app:serverside}

If the $\Ver$ predicate takes secret inputs from the servers,
the servers can compute $\Ver(x)$ on a client-provided input
$x$ without learning anything about $x$, except the value of $\Ver(x)$.
In addition, the client learns nothing about the $\Ver$ circuit, except the
number of multiplication gates in the circuit.

Crucially, this server-side variant only provides security against
``honest-but-curious'' servers, who execute the protocol correctly
but who try to learn everything possible about the clients' private data
from the protocol execution.
It may be possible to provide security against actively malicious servers using
a client-assisted variant of the SPDZ multi-party computation
protocol~\cite{SPDZ}, though we leave this extension to future work.

Let $M$ be the number of multiplication gates in the $\Ver$ circuit. 
To execute the $\Ver$ computation on the server side, the client sends
$M$ multiplication triple shares (defined in Appendix~\ref{app:beaver:proto}) 
to each server, along with a share of its private value $x$.
Let the $t$-th multiplication triple be of the form $(a_t, b_t, c_t) \in \F^3$.
Then define a circuit $\mathcal{M}$ that returns ``$1$'' if and only if
$c_t = a_t \cdot b_t$, for all $1 \leq t \leq M$.

The client can use a SNIP proof (Section~\ref{sec:disrupt:overview}) to
convince the servers that all of the $M$ triples it sent the servers are
well-formed.
Then, the servers can execute Beaver's multiparty computation protocol
(Section~\ref{app:beaver:proto}) to evaluate the circuit using the $M$
client-provided multiplication triples. 
The security of this portion of the protocol, with respect to ``honest but
curious'' servers, follows directly from the security of Beaver's protocol. 

Running the computation requires the servers to exchange $\Theta(M)$ field elements,
and the number of rounds of communication is proportional to the multiplicative depth
of the $\Ver$ circuit (i.e., the maximum number of multiplication gates on an
input-output path).

 \section{\afe definitions}
\label{app:afe-defs}

An \afe is defined relative to 
a field $\F$, two integers $k$ and $k'$ (where $k' \leq k$),
a set $\D$ of data elements, 
a set $\A$ of possible values of the aggregate statistic, 
and an aggregation function $f: \D^n \to \A$.
An \afe scheme consists of three efficient algorithms.  
The algorithms are:
\begin{itemize}

  \item $\Enc: \D \to \F^k$. 
        Covert a data item into its \afe-encoded
        counterpart.

  \item $\Ver: \F^k \to \{0,1\}$.
        Return ``$1$'' if and only if the input is in the image of $\Enc$. 

  \item $\Dec: \F^{k'} \to \A$.
        Given a vector representing a collection of encoded 
        data items, return the value of the aggregation function $f$
        evaluated at these items.

\end{itemize}

To be useful, an \afe encoding should satisfy the following 
properties:
\begin{defn}[\afe correctness]
We say that an \afe is {\em correct}
for an aggregation function $f$ if, 
for every choice of $(x_1, \dots, x_n) \in \D^n$, we have that:
\[ \Dec\big(\textstyle \sum_i \proj{k'}\big(\Enc(x_i)\big)\ \big) 
= f(x_1, \dots, x_n). \]
Recall that $\proj{k'}(v)$ denotes truncating the vector
$v \in \F_p^k$ to its first $k'$ components. 
\end{defn}

The correctness property of an \afe
essentially states that if we are given valid encodings of data items
$(x_1, \dots, x_n) \in \D^n$, the decoding of their sum should be 
$f(x_1, \dots, x_n)$.

\begin{defn}[\afe soundness]
We say that an \afe is {\em sound} if, for all encodings $e \in \F^k$:
the predicate $\Ver(e) = 1$ if and only if there exists a data item
$x \in \D$ such that $e = \Enc(x)$.
\end{defn}

An \afe is private with respect to a function $\hat{f}$, 
if the sum of encodings $\sigma = \sum_i
\proj{k'}(\Enc(x_i))$, given as input to algorithm
$\Dec$, reveals nothing about the underlying data beyond what 
$\hat{f}(x_1,\ldots,x_n)$ reveals.

\begin{defn}[\afe privacy]\label{defn:afepriv}
We say that an \afe is {\em private} with respect to a function
$\hat{f}:\D^n \to \A'$ if there exists an efficient simulator $S$
such that for all input data $(x_1,\ldots,x_n) \in \D^n$,
the distribution $S(\hat{f}(x_1,\ldots,x_n))$ is indistinguishable from the 
distribution $\sigma = \sum_i \proj{k'}(\Enc(x_i))$. 
\end{defn}

\paragraph{Relaxed correctness.}
In many cases, randomized data structures are more efficient than their
deterministic counterparts.
We can define a relaxed notion of correctness to capture a correctness
notion for randomized \afes.
In the randomized case, the scheme is parameterized by constants 
$0<\delta, \epsilon$ and the $\Dec$ algorithm
may use randomness.
We demand that with probability at least $1-2^{-\delta}$,
over the randomness of the algorithms, the encoding yields an ``$\epsilon$-good''
approximation of $f$.
In our applications, typically an $\epsilon$-good approximation
is within a multiplicative or additive factor of $\epsilon$ from the true value;
the exact meaning depends on the \afe in question.

 \section{Additional \afes}
\label{app:afe-extra}

\paragraph{Approximate counts.}
The frequency count \afe, presented in Section~\ref{sec:struct:basic},
works well when the client value $x$ lies in a small set of possible
data values $\D$.
This \afe requires communication linear in the size of $\D$.  
When the set $\D$ is large, a more 
efficient solution is to use a randomized counting data structure,
such as a count-min sketch~\cite{cormode2005improved}. 

Melis et al.~\cite{melis2016} demonstrated how to combine a count-min
sketch with a secret-sharing scheme to efficiently compute counts
over private data.
We can make their approach robust to malicious clients by 
implementing a count-min sketch \afe in \name.
To do so, we use $\ln (1/\delta)$
instances of the basic frequency count \afe, each for a set 
of size $e/\epsilon$,
for some constants $\epsilon$ and $\delta$, and where $e \approx 2.718$. 
With $n$ client inputs, the count-min sketch yields counts that are 
at most an additive $\epsilon n$
overestimate of the true values, except with probability $e^{-\delta}$.

Crucially, the $\Ver$ algorithm for this composed construction 
requires a relatively small number of multiplication gates---a few hundreds, 
for realistic choices of $\epsilon$ and $\delta$---so the servers can 
check the correctness of the encodings efficiently.

This \afe leaks the contents of a count-min sketch data structure
into which all of the clients' values $(x_1, \dots, x_n)$ have been inserted.

\itpara{Share compression.}
The output of the count-min sketch \afe encoding routine is essentially
a very sparse matrix of dimension $\ln(1/\delta) \times (e/\epsilon)$.
The matrix is all zeros, except for a single ``1'' in each row.
If the \name client uses a conventional secret-sharing scheme to split
this encoded matrix into $s$ shares---one per server---the 
size of each share would be as large as the matrix itself, even though
the plaintext matrix contents are highly compressible.

A more efficient way to split the matrix into shares 
would be to use a function secret-sharing
scheme~\cite{boyle2016function,boyle2015function,gilboa2014distributed}.
Applying a function secret sharing scheme to each row of the encoded
matrix would allow the size of each share to grow as the square-root of
the matrix width (instead of linearly).
When using \name with only two servers, there are very efficient 
function secret-sharing constructions that would allow the shares
to have length logarithmic in the width of the matrix~\cite{boyle2016function}.
We leave further exploration of this technique to future work.

\paragraph{Most popular.}
Another common task is to return the most popular string in a data
set, such as the most popular homepage amongst a set of Web clients. 
When the universe of strings is small, it is
possible to find the most popular string using the frequency-counting
\afe. When the universe is large (e.g., the set of all URLs), this method is not
useful, since recovering the most popular string would require
querying the structure for the count of every possible string.
Instead, we use a simplified version of a data structure of Bassily
and Smith~\cite{bassily2015local}.

When there is a very popular string---one that more than $n/2$ clients
hold, we can construct a very efficient \afe for collecting it.  
Let $\F$ be a field of size at least~$n$. 
The $\Enc(x)$ algorithm represents its input $x$ as a $b$-bit string 
$x = (x_0, x_1, x_2, \dots, x_{b-1}) \in \{0,1\}^b$, 
and outputs a vector of
$b$ field elements $(\beta_0, \dots, \beta_{b-1}) \in \F^b$, where
$\beta_i = x_i$ for all $i$.  
The $\Ver$ algorithm uses $b$ multiplication gates to check
that each value $\beta_i$ is really a $0/1$ value in $\F$, as in 
the summation \afe. 

The $\Dec$ algorithm gets as input the sum of $n$ such encodings 
$\sigma = \sum_{i=1}^n \Enc(x_i) = (e_0, \dots, e_{b-1}) \in \F^b$.
The $\Dec$ algorithm rounds each value $e_i$ either down to zero or up to $n$ (whichever is closer)
and then normalizes the rounded number by $n$ to get a $b$-bit 
binary string $\sigma \in \{0,1\}^b$, which it outputs.
As long as there is a string $\sigma^*$ with popularity greater than $50\%$, 
this \afe returns it.
To see why, consider the first bit of $\sigma$. 
If $\sigma^*[0]=0$, then the sum $e_0 < n/2$ and $\Dec$ outputs ``$0$.''
If $\sigma^*[0]=1$, then the sum $e_0 > n/2$ and $\Dec$ outputs ``$1$.''
Correctness for longer strings follows

This \afe leaks quite a bit of information about the given data.
Given $\sigma$, one learns the number of data values that have their
$i$th bit set to 1, for every $0 \leq i < b$.  In fact, the \afe is
private relative to a function that outputs these $b$ values, which
shows that nothing else is leaked by $\sigma$.

With a significantly more complicated construction, we can adapt a
similar idea to collect strings that a constant fraction $c$
of clients hold, for $c \leq 1/2$.  The idea is to have the
servers drop client-submitted strings at random into different
``buckets,'' such that at least one bucket has a very popular string
with high probability~\cite{bassily2015local}.

\paragraph{Evaluating an arbitrary ML model.}
We wish to measure how well a public regression model 
predicts a target $y$ from a client-submitted feature vector~$x$. 
In particular, if our model outputs a prediction $\hat{y}=M(x)$, we would like
to measure how good of an approximation $\hat{y}$ is of $y$.
The $R^2$ coefficient is one statistic for capturing this information.

Karr et al.~\cite{karr2005secure} observe that it is possible to 
reduce the problem of computing the $R^2$ coefficient of a public regression
model to the problem of computing private sums. 
We can adopt a variant of this idea to use \name to compute the $R^2$
coefficient in a way that leaks little beyond the coefficient itself.

The $R^2$-coefficient of the model for client inputs $\{x_1,\ldots,x_n\}$ is
$R^2 = 1 - \sum_{i=1}^2 (y_i - \hat{y}_i)^2 / \text{Var}(y_1,\ldots,y_n)$,
where $y_i$ is the true value associated with $x_i$, 
$\hat{y}_i = M(x_i)$ is the predicted value of $y_i$, 
and $\text{Var}(\cdot)$ denotes variance. 

An \afe for computing the $R^2$ coefficient works as follows.
On input $(x,y)$, the $\Enc$ algorithm first computes 
the prediction $\hat{y} = M(x)$ using the public model $M$.
The $\Enc$ algorithm then outputs the 
tuple $(y, y^2, (y - \hat{y})^2, x)$, embedded in a 
finite field large enough to avoid overflow.

Given the tuple $(y, Y, Y^*, x)$ as input, the $\Ver$ algorithm 
ensures that $Y = y^2$ and $Y^* = (y - M(x))^2$.
When the model $M$ is a linear regression model,
algorithm $\Ver$ can be represented as an arithmetic circuit
that requires only two multiplications.
If needed, we can augment this with a check that the $x$ values 
are integers in the appropriate range using a range check, as in 
prior \afes.
Finally, given the sum of encodings restricted to the first three
components, the $\Dec$ algorithm 
has the information it needs to compute the $R^2$ coefficient.

This \afe is private with respect to a function that outputs the
$R^2$ coefficient, along with the expectation and variance of 
$\{y_1,\ldots,y_n\}$. 

 \section{\Name protocol and proof sketch}
\label{app:security}

We briefly review the full \name protocol and then discuss its security. 

\paragraph{The final protocol.}
We first review the \name protocol from Section~\ref{sec:struct}.
Let there be $m$ malicious clients whose values are included in the final aggregate.
We assume that every honest client~$i$, for $i \in \{1, \dots, n-m\}$,
holds a private value $x_i$ that lies in some set of data items $\D$. 
We want to compute an aggregation function $f: \D^{n-m} \to \A$ on these 
private values using an \afe.
The \afe encoding algorithm $\Enc$ maps $\D$ to $\F^k$, for some field $\F$
and an arity $k$. When decoding, the encoded vectors in $\F^k$ are first
truncated to their first $k'$ components. 

The \name protocol proceeds in four steps:

\begin{enumerate}[leftmargin=\parindent]
\item \textbf{Upload.}
      Each client $i$ computes $y_i \gets \Enc(x_i)$ and splits its encoded
      value into $s$ shares, one per server.
      To do so, the client picks random values $[y_i]_1, \dots, [y_i]_s \in \F^k$,
      subject to the constraint:
      $y_i = [y_i]_1 + \cdots + [y_i]_s \in \F^k$.
      The client then sends, over an encrypted and authenticated channel, one share 
      of its submission to each server, along with a share of a SNIP
      proof (Section~\ref{sec:disrupt}) that $\Ver(y_i) = 1$.

\item \textbf{Validate.} \label{sec:struct:proto:valid}
      Upon receiving the $i$th client submission, the servers 
      verify the client-provided SNIP
      to jointly confirm that $\Ver(y_i)=1$ 
      (i.e., that client's submission is well-formed).
      If this check fails, the servers reject the submission.

\item \textbf{Aggregate.}
      Each server $j$ holds an accumulator value $A_j \in \F^{k'}$, 
      initialized to zero, where $0 < k' \leq k$.
      Upon receiving a share of a client encoding $[y_i]_j \in \F^k$,
      the server truncates $[y_i]_j$ to its first $k'$ components, and
      adds this share to its accumulator: 
      $$A_j \gets A_j + \proj{k'}([y_i]_j) \in \F^{k'}.$$
      Recall that $\proj{k'}(v)$ denotes truncating the vector
      $v \in \F_p^k$ to its first $k'$ components.

\item \textbf{Publish.} \label{step:publish}
      Once the servers have received a share from each client, they publish
      their accumulator values. 
      The sum of the accumulator
      values $\sigma = \sum_j A_j \in \F^{k'}$ yields the 
      sum $\sum_i \proj{k'}(y_i)$ of the clients' private
      encoded values.
      The servers output $\Dec(\sigma)$. 
\end{enumerate}

\paragraph{Security.} 
We briefly sketch the security argument for the complete protocol. 
The security definitions appear in Appendix~\ref{app:secdefs}.

First, the robustness property (Definition~\ref{def:robust}) 
follows from the soundness of the
SNIP construction: as long as the servers are honest, they will 
correctly identify and reject any client submissions that do 
not represent proper \afe encodings.

Next, we argue $f$-privacy (Definition~\ref{defn:fpriv}). 
Define the function 
$$g(x_1, \dots, x_{n-m}) = \sum_{i=1}^{n-m} \proj{k'}\big(\Enc(x_i)\big).$$ 
We claim that, as long as:
\begin{itemize}
  \item at least one server executes the protocol correctly, 
  \item the \afe construction is private with respect to $f$,
        in the sense of Definition~\ref{defn:afepriv}, and
  \item the SNIP construction satisfies the zero-knowledge
        property (Section~\ref{sec:disrupt:overview}),
\end{itemize}
the only information that leaks to the adversary is 
the value of the function $f$ on the private values of 
the honest clients included in the final aggregate.

To show this, it suffices to construct a simulator $S$ that takes as input
$\sigma = g(x_1, \dots, x_{n-m})$ and outputs a transcript of the protocol
execution that is indistinguishable from a real transcript.
Recall that the \afe simulator takes $f(x_1,\ldots,x_{n-m})$ as input
and simulates $\sigma$.  
Composing the simulator $S$ with the \afe simulator yields
a simulator for the entire protocol, 
as required by Definition~\ref{defn:fpriv}.

On input $\sigma$, the simulator $S$ executes these steps:
\begin{itemize}[leftmargin=\parindent]
  \item To simulate the submission of each honest client, 
        the simulator invokes the SNIP simulator as a subroutine. 
  \item To simulate values produced by adversarial clients, the simulator
        can query the adversary (presented to the simulator as an oracle) 
        on the honest parties' values generated so far.
  \item The simulator must produce a simulation of the
        values produced by the honest servers 
        in Step~\ref{step:publish} of the protocol.

        Let $\sigma$ be the sum of the honest clients' encodings. 
        For the simulation to be accurate, the honest servers must publish
        values $A_j$ such that the honest servers' accumulators sum
        to (1) all of the shares of encodings given to the honest servers
        by adversarial clients plus (2) $\sigma$ minus (3) 
        the shares of honest clients' simulated encodings given to adversarial servers.

        Let $[y_i]_j$ be the $j$th share of the encoding sent by client $i$.
        Let $\textsf{Serv}_\A$ be the set of indices of the adversarial servers and
        $\textsf{Client}_\A$ be the set of indices of the adversarial clients.
        Let $\textsf{Serv}_H$ and $\textsf{Client}_H$ be the set of indices of 
        the honest servers and clients, respectively.
        Then the accumulators $A_j$ must satisfy the relation:
        \begin{multline*}
          \sum_{j \in \textsf{Serv}_H} A_j = \sigma + \left(\sum_{j \in \textsf{Serv}_H}\sum_{i \in \textsf{Client}_\A}[y_i]_j\right) \\- \left(\sum_{j \in \textsf{Serv}_\A}\sum_{i \in \textsf{Client}_H} [y_i]_j\right).
                  \end{multline*}
        The simulator can pick random values for the honest servers' $A_j$s subject to 
        this constraint, since the simulator knows $\sigma$, it knows the values that 
        the honest clients' sent to the adversarial servers, and it knows
        the values that the adversarial clients sent to the honest servers.
\end{itemize}
To argue that the simulator $S$ correctly simulates a 
run of the real protocol: 
\begin{itemize}[leftmargin=\parindent]
  \item The existence of the SNIP simulator implies that everything
        the adversarial servers see in the SNIP checking step, when 
        interacting with an honest client, is independent of the
        honest client's private value $x_i$, provided that $\Ver(x_i) = 1$.
        This property holds even if the malicious servers deviate from
        the protocol in a way that causes an honest client's submission
        to be rejected by the honest servers.
  \item The simulation of the aggregation step of the protocol is perfect.
        In the real protocol, since the adversary sees only $s-1$ shares of
        each client submission, the values $A_j$ are just 
        random values subject to the constraint above.
\end{itemize}

\medskip
Finally, anonymity (Definition~\ref{def:anon}) follows by
Claim~\ref{claim:anonpriv} whenever the function $f$ is symmetric.
Otherwise, anonymity is impossible, by Claim~\ref{claim:nonsym}.

 \section{Additional optimizations}
\label{app:opts}

\paragraph{Optimization: PRG secret sharing.}
The \name protocol uses additive secret sharing to split the clients' private data
into shares.
The na\"ive way to split a value $x \in \F^L$ 
into $s$ shares is to choose $[x]_1, \dots, [x]_{s-1} \in \F^L$
at random and then set $[x]_s = x - \sum_{i=1}^{s-1} [x]_i \in \F^L$.
A standard bandwidth-saving optimization is to generate the first $s-1$ shares
using a pseudo-random generator (PRG) $G: \K \to \F^L$,
such as AES in counter mode~\cite{hugo93,BR07}.
To do so, pick $s-1$ random PRG keys $k_1, \dots, k_{s-1} \in \K$, 
and define the first $s-1$ shares as $G(k_1), G(k_2), \dots, G(k_{s-1})$.
Rather than representing the first $s-1$ shares as vectors in $\F^L$, 
we can now represent each of the first $s-1$ shares using a single AES key.
(The last share will still be $L$ field elements in length.)
This optimization 
reduces the total size of the shares from $sL$ field elements down to $L + O(1)$.
For $s=5$ servers, this $5\times$ bandwidth savings is significant.

\paragraph{Optimization: Verification without interpolation.}
In the course of verifying a SNIP
(Section~\ref{sec:disrupt:snip}, Step~\ref{proto:poly}), 
each server $i$ needs to interpolate two large polynomials
$[f]_i$ and $[g]_i$. 
Then, each server must evaluate the polynomials $[f]_i$ and $[g]_i$ at a
randomly chosen point $r \in \F$ to execute
the randomized polynomial identity test (Step~\ref{proto:test}).

The degree of these polynomials is close to $M$, where $M$ is the number of multiplication 
gates in the $\Ver$ circuit.
If the servers used straightforward polynomial interpolation and evaluation to verify the
SNIPs, the servers would need to perform $\Theta(M \log M)$
multiplications to process a single client submission, even using optimized FFT methods.
When the $\Ver$ circuit is complex (i.e., $M \approx 2^{16}$ or more),
this $\Theta(M \log M)$ cost will be substantial.

Let us imagine for a minute that we could fix {\em in advance}
the random point $r$ that the servers use to execute
the polynomial identity test.
In this case, each server can perform interpolation and
evaluation of any polynomial $P$ 
in one step using only $M$ field multiplications per server,
instead of $\Theta(M\log M)$.
To do so, each server precomputes constants $(c_0, \dots, c_{M-1}) \in \F^{M}$.
These constants depend on the $x$-coordinates of the points being interpolated
(which are always fixed in our application) and on the point $r$
(which for now we assume is fixed).
Then, given points $\{(t, y_t)\}_{t=0}^{M-1}$ on a polynomial $P$, the servers can evaluate
$P$ at $r$ using a fast inner-product computation: $P(x) = \sum_t c_t y_t \in \F$.
Standard Lagrangian interpolation produces
these $c_i$s as intermediate values~\cite{interp}.

Our observation is that the servers {\em can} fix the ``random'' point $r$ 
at which they evaluate the polynomials $[f]_i$ and $[g]_i$ as long as: 
(1) the clients never learn $r$, and
(2) the servers sample a new random point $r$ periodically.
The randomness of the value $r$ only affects soundness.
Since we require soundness to hold only if all \name servers are honest, we
may assume that the servers will never reveal the value $r$ to the clients.

A malicious client may try to learn something over time about the servers'
secret value $r$ by sending a batch of well-formed and malformed submissions and
seeing which submissions the servers do or do not accept.
A simple argument shows that after making $q$ such queries, the client's 
probability of cheating the servers is at most $(2M+1)q/|\F|$.
By sampling a new point after every $Q \approx 2^{10}$ client uploads, the servers can 
amortize the cost of doing the interpolation precomputation over $Q$ client uploads, 
while 
keeping the failure probability bounded above by $(2M+1)Q/|\F|$, which they might
take to be $2^{-60}$ or less.

In \name, we apply this optimization to combine the interpolation of $[f]_i$
and $[g]_i$ with the evaluation of these polynomials at the point $r$.

In Step~\ref{proto:poly} of the SNIP verification process, 
each server must also evaluate the
client-provided polynomial $[h]_i$ at each point $t \in \{0, \dots, M\}$.
To eliminate this cost, we have the client send the polynomial $[h]_i$ 
to each server $i$ in point-value form.
That is, instead of sending each server shares of the coefficients of $h$, 
the client sends each server shares of evaluations of $h$. 
In particular, the client evaluates $[h]_i$ at all of 
the points $t \in \{0, \dots, 2M\}$ and
sends the evaluations $[h]_i(0), [h]_i(1), [h]_i(2), \dots, [h]_i(2M)$ to the server.
Now, each server $i$ already has the evaluations of $[h]_i$ at all of the points it
needs to complete Step~\ref{proto:poly} of the SNIP verification.
To complete Step~\ref{proto:test}, each server must interpolate
$[h]_i$ and evaluate $[h]_i$ at the point $r$.
We accomplish this using the same fast interpolation-and-evaluation trick
described above for $[f]_i$ and $[g]_i$

\paragraph{Circuit optimization}
In many cases, the servers hold multiple verification 
circuits $\Ver_1, \dots, \Ver_N$ and want to check whether 
the client's submission passes all $N$ checks.
To do so, we have the $\Ver$ circuits return zero (instead of one) on success.
If $W_j$ is the value on the last output wire of the circuit $\Ver_j$,
we have the servers choose random values $(r_1, \dots, r_N) \in \F^N$
and publish the sum $\sum_j r_j W_j$ in the last step of the protocol.
If any $W_j \neq 0$, then this sum will be non-zero with high probability
and the servers will reject the client's submission.

\end{document}